\documentclass{ws-ijfcs}

\usepackage{amsmath}
\usepackage{amssymb}
\usepackage{array}
\usepackage{complexity}
\usepackage{endnotes}
\usepackage{enumitem}
\usepackage{etoolbox}
\usepackage{float}
\usepackage{hyperref}
\usepackage[utf8]{inputenc}
\usepackage{listings}
\usepackage{longtable}
\usepackage{mathrsfs}
\usepackage{mathtools}
\usepackage{verbatim}
\usepackage{wasysym}
\usepackage{xcolor}
\usepackage{multirow}

\ifdefined\DEBUG
	\selectcolormodel{cmyk}
	\newcommand{\fixme}[1]{{\textcolor{red}{\bf{\textsf{FIXME: #1}}}}}
	\newcommand{\bug}[1]{{\textcolor{blue}{\bf{\textsf{BUG: #1}}}}}
	\newcommand{\idea}[1]{{\textcolor{blue}{\bf{\textsf{IDEA: #1}}}}}
	
	\newcommand{\NOTE}[1]{{\textcolor{blue}{\bf{\textsf{NOTE: #1}}}}}
	\newcommand{\TODO}[1]{{\textcolor{red}{\bf{\textsf{TODO: #1}}}}}
	\newcommand{\OLD}[1]{{\color{red}{#1}}}
	
	\newcommand{\MAR}[1]{\marginpar{{\color{red}{\textsf{#1}}}}}
\else
	\newcommand{\fixme}[1]{}
	\newcommand{\bug}[1]{}
	\newcommand{\idea}[1]{}
	
	\newcommand{\NOTE}[1]{}
	\newcommand{\TODO}[1]{}
	\newcommand{\OLD}[1]{}
	
	\newcommand{\MAR}[1]{}
\fi

\newclass{\SA}{SA}
\newclass{\NESA}{NESA}
\newclass{\CSA}{CSA}

\newclass{\NFA}{NFA}
\newclass{\tNFA}{2NFA}

\newclass{\DFA}{DFA}

\newcommand{\NN}{\mathbb{N}}
\newcommand{\NNZ}{\NN_0}

\newcommand{\setb}[2]{\{{#1}\mid{#2}\}} 

\let\lang\undefined
\DeclareMathOperator{\lang}{\operatorname{L}}
\DeclareMathOperator{\stlang}{\operatorname{S}}
\DeclareMathOperator{\langfam}{\mathcal{L}}
\DeclareMathOperator{\pref}{pref}

\newcommand{\emr}{{\,\triangleleft}}
\newcommand{\eml}{{\triangleright\,}}
\newcommand{\rw}{\rotatebox[origin=c]{180}{$\Rsh$}}
\newcommand{\purestack}[1]{\hat{#1}}

\newcommand\ipush[1]{\ensuremath{\mathtt{push(}#1\mathtt{)}}}
\newcommand\istay{\ensuremath{\mathtt{stay}}}
\newcommand\ipop{\ensuremath{\mathtt{pop}}}

\newcommand{\goesto}{\rightarrow}
\newcommand{\goestox}[1]{\mathrel{\mathop{\xrightarrow{#1}}}}

\newcommand{\movew}{\vdash_w}
\newcommand{\mover}{\vdash_r}
\newcommand{\move}{\vdash}

\newcommand{\movewp}{\vdash_w^+}
\newcommand{\moverp}{\vdash_r^+}
\newcommand{\movep}{\vdash^+}

\newcommand{\movews}{\vdash_w^*}
\newcommand{\movers}{\vdash_r^*}
\newcommand{\moves}{\vdash^*}

\newcommand{\CC}{\mathcal{C}}

\newcommand{\stacksize}[1]{{\Vert{#1}\Vert}_{\Gamma}}
\newcommand{\strlen}[1]{|{#1}|}

\newcommand{\sspace}{\sigma^{\rm s}}
\newcommand{\aspace}{\sigma^{\rm a}}
\newcommand{\wspace}{\sigma^{\rm w}}
\newcommand{\zspace}{\sigma^z}
\newcommand{\sgrow}{\mathrlap{\sigma}\raisebox{0.5pt}{\rotatebox[origin=c]{15}{$\overline{\phantom{\sigma}}$}}}

\newcommand{\alspace}{\sgrow^{\rm a}}
\newcommand{\wlspace}{\sgrow^{\rm w}}
\newcommand{\zlspace}{\sgrow^z}

\newcommand{\bigO}{O}
\newcommand{\littleO}{o}

\newcommand{\Lcomp}{L_\mathrm{comp}}
\newcommand{\Lcopy}{L_\mathrm{copy}}
\newcommand{\Lkd}{L_{\text{$k$-distinct}}}
\newcommand{\Lww}{L_{ww}}

\newclass{\Ss}{S}
\newcommand{\SCSAw}{\Ss_{\CSA}^{\mathrm{w}}}
\newcommand{\SCSAa}{\Ss_{\CSA}^{\mathrm{a}}}
\newcommand{\SCSAs}{\Ss_{\CSA}^{\mathrm{s}}}

\newcommand{\SSAw}{\Ss_{\SA}^{\mathrm{w}}}

\newcommand{\SMz}{\Ss_M^z}

\newtheorem{claim}[theorem]{Claim}
\newtheorem{conjecture}[theorem]{Conjecture}

\title{Space Complexity of Stack Automata Models
\thanks{The research of I.\ McQuillan and L.\ Prigioniero was supported, in part,
by Natural Sciences and Engineering Research Council of Canada Grant 2016-06172.}
\thanks{\textcopyright 2022. This manuscript version is made available under the CC-BY 4.0 license \url{https://creativecommons.org/licenses/by/4.0/}. Published at {\it International Journal of Foundations of Computer Science}, 32 (6), 801--823 (2021) \url{https://doi.org/10.1142/S0129054121420090}}}

\author{Oscar H. Ibarra}
\address{Department of Computer Science, University of California\\
Santa Barbara, California 93106, USA\\
\email{ibarra@cs.ucsb.edu}}

\author{Jozef Jir\'asek, Jr. and Ian McQuillan}
\address{Department of Computer Science, University of Saskatchewan\\
Saskatoon, Saskatchewan S7N 5A9, Canada\\
\email{jirasek.jozef@usask.ca, mcquillan@cs.usask.ca}}

\author{Luca Prigioniero}
\address{Dipartimento di Informatica, Università degli Studi di Milano\\
Via Celoria, 18 - Milan, Italy\\
\email{prigioniero@di.unimi.it}}

\begin{document}

\maketitle

\begin{abstract}
This paper examines several measures of space complexity of variants of stack automata:
non-erasing stack automata and checking stack automata.
These measures capture the minimum stack size required to accept every word in the language of the automaton (weak measure),
the maximum stack size used in any accepting computation on any accepted word (accept measure),
and the maximum stack size used in any computation (strong measure).
We give a detailed characterization of the accept and strong space complexity measures for checking stack automata.
Exactly one of three cases can occur:
the complexity is either bounded by a constant,
behaves 
like a linear function, or it
can not be bounded by any function of the length of the input word (and it
is decidable which case occurs).
However, this result does not hold for non-erasing stack automata;
we provide an example where the space complexity grows proportionally to the square root of the length of the input.
Furthermore, we study the complexity bounds of machines which accept a given language,
and decidability of space complexity properties.

\keywords{checking stack automata, stack automata, pushdown automata, space complexity, machine models}
\end{abstract}

\section{Introduction}

When studying different machine models, it is common to study both time and space complexity of a machine or an algorithm.
In particular, the study of complexity of Turing machines gave way to the area of computational complexity,
which has been an intensively studied branch of theoretical computer science for the past 40 years~\cite{Har95}.
The field of automata theory specializes in different machine models,
often with more restricted types of data stores and operations.
Various models of automata differ in the languages that can be accepted by the machine,
in the amount of information needed to describe the machine (e.g.,\ the number of states),
in the decidability of various properties of a machine,
and in the complexity of these decision problems.
As it turns out, most properties of Turing machines are undecidable,
such as whether a given machine accepts any word at all.\MAR{Reference, Rice theorem?}
Many such properties turn to be decidable when considering models with limited resources.
These include finite automata \cite{HU,harrison1978}, pushdown automata \cite{HU,harrison1978}, stack automata \cite{CheckingStack}, checking stack automata \cite{CheckingStack}, visibly pushdown automata \cite{visibly}, and many others.
As an example, the membership and emptiness problems for all the above models are decidable.

In this paper, we consider the amount of working memory that a machine uses for its computation.
For Turing machines, several different measures of space complexity have been studied.
Some of these are \cite{GiovanniTMs}:
\begin{itemize}[nosep]
\item \emph{weak measure}: for an input word $w$, the smallest tape size required for some accepting computation on $w$;
\item \emph{accept measure}: for an input word $w$, the largest tape size required for any accepting computation on $w$;
\item \emph{strong measure}: for an input word $w$, the largest tape size required for any computation on $w$.
\end{itemize}
For any of these measures, the space complexity of a machine can be defined as a function of an integer $n$
as the maximum tape size required for any input word of length $n$ under these conditions.
Finally, given a language, one can examine the space complexity of different machines accepting this language.
While these measures are well known in the context of Turing machines,%
\footnote{We point out that with regard to Turing machines,
the weak measure of complexity, corresponding to the minimal cost among all accepting computations on a given input,
is by far the most commonly used.}
they have not been extensively studied for many restricted models.
This paper aims to fill these gaps for several types of machines with a one-way read-only input tape
and a working tape that operates in stack mode: it can only be modified at one of its ends.

We study the above complexity measures for machines and languages of
stack automata, non-erasing stack automata, and checking stack automata.
Stack automata are, intuitively, pushdown automata with the additional ability to read letters from inside the stack;
but still only push to and pop from the top of the stack.
Non-erasing stack automata are stack automata without the ability to erase (pop) letters from the stack.
Finally, checking stack automata are further restricted so that as soon as they read from inside of the stack,
they can no longer modify the stack.

It is known that checking stack languages form a proper subset of non-erasing stack languages,
which form a proper subset of stack languages \cite{CheckingStack},
and those in turn form a proper subset of context-sensitive languages \cite{StackAutomataCS}.
Further results are known relating one-way and two-way versions of these machines to other models, and to space complexity classes of Turing machines, e.g.,~\cite{EngelfrietCheckingStack,DLT2018,KingWrathall}.
In terms of space complexity, it is possible to study the three space complexity measures (weak, accept, and strong)
as the maximum stack size required for any input of length~$n$.
It is already known that every stack language can be accepted by \emph{some} stack automaton
which operates in linear space using the weak measure \cite{StackAutomataCS,KingWrathall}.
However, this does not imply that \emph{every} stack automaton has this property.
We prove that every \emph{checking} stack automaton operates in linear space using the weak measure.

For checking stack automata, we give a complete characterization of the possible accept and strong space complexities.
For both measures, exactly one of the following three cases must occur for every checking stack automaton:
\begin{enumerate}[nosep]
\item The complexity is $\bigO(1)$. Then the automaton accepts a regular language.
\item There is some (accepted) word $u$ which has (accepting) computations that use arbitrarily large stack space on~$u$, and so the strong (accept) complexity is not $\bigO(f(n))$ for any integer function $f$.
The language accepted can be regular or not.
\item The complexity is $\bigO(n)$, but it is not $\littleO(n)$.
The language accepted can be regular or not.
\end{enumerate}
The third case is essentially saying that the complexity is $\Theta(n)$,
except for some technical details that will be discussed further in the paper.
Therefore, there is a ``gap'' in the possible asymptotical behaviors of space complexity.
No checking stack machine can have a space complexity between $\Theta(1)$ and $\Theta(n)$;
or complexity above $\Theta(n)$ (as long as there is \emph{some} function which bounds the space). Moreover, it is decidable to determine which one of these three scenarios occurs.
We summarize the results, alongside similar results for non-erasing and stack automata in Table \ref{tab-results1}.
The lower bound proof uses a method involving store languages of stack automata
(the language of all words occurring on the stack during an accepting computation \cite{StoreLanguages}).
We have not seen this technique used previously in literature.
Indeed, store languages are used in multiple proofs in this paper.

In the case of non-erasing stack automata, we obtain a different result:
the complexity can be $\littleO(n)$, though not constant.
We present an automaton with a weak, accept, and strong space complexities in $\Theta(\sqrt{n})$.

We also consider the following problem:
Given a language (accepted by one of the stack automaton models),
what is the lowest space complexity of any machine accepting this language, according to each of the measures?
This clearly corresponds to finding an algorithm (a Turing machine)
which solves a given problem in the least amount of working space.
For the strong measure, we show that there is a checking stack language such that
every machine accepting it needs to use arbitrarily larger stack space than the size of the input.
Therefore the strong complexity of this language is not $O(f(n))$ for any integer function $f$.
An overview on the results we have obtained on this problem is given in Table \ref{tab-results2}.

Lastly, some decidability questions on space complexity are addressed.
We show that it is undecidable whether a checking stack automaton operates in constant space using the weak measure.
However, using either the strong or accept measure,
this property turns to be decidable for any of the three studied machine models.

\MAR{Is it decidable if S/A is bounded? I think yes: detect infinite pushing lambda loops. Proven at the end, add notes.}

\section{Preliminaries}

This section introduces basic notation used in this paper, and defines the three models of stack automata that we shall consider.
The three measures of space complexity are described in the next section.

We assume that the reader is familiar with basics of formal language and automata theory.
Please see \cite{HU} for an introduction.
An {\em alphabet} is a finite set of {\em letters}.
A {\em word} over an alphabet $\Sigma = \{a_1, \ldots, a_k\}$ is a finite sequence of letters from $\Sigma$.
The set of all words over $\Sigma$ is denoted by $\Sigma^*$,
which includes the {\em empty word}, denoted by $\lambda$.
A {\em language} $L$ (over $\Sigma$) is any set of words $L \subseteq \Sigma^*$. The complement of $L$ over $\Sigma$, denoted by $\overline{L}$, is equal to $\Sigma^* \setminus L$.
Given a word $w\in \Sigma^*$, the {\em length} of $w$ is denoted by $|w|$,
and the number of occurrences of a letter $a_i$ in $w$ by $|w|_{a_i}$.
The {\em Parikh image} of $w$ is the vector $\psi(w) = (|w|_{a_1}, \ldots, |w|_{a_k})$,
which is extended to a language $L$ as $\psi(L) = \setb{\psi(w)}{w \in L}$.
\MAR{Is this used anywhere else?}
We do not define the concept of semilinearity formally here,
but it is known that a language $L$ is {\em semilinear} if and only if there is a regular language $L'$ with $\psi(L) = \psi(L')$ \cite{harrison1978}.
Given two words $u, w \in \Sigma^*$, we say that $u$ is a {\em prefix} of $w$ if $w = uv$ for some $v \in \Sigma^*$.
The {\em prefix closure} of a language $L$, $\pref(L)$, is the set of all prefixes of all words in $L$.
It is known that if $L$ is a regular language, then $\pref(L)$ is also regular.

We use the well-known big-O notation:
for two positive functions $f$ and $g$, $f(n)$ is in $\bigO(g(n))$ if there exist
positive real numbers $c$ and $n_0$ such that $f(n) \leq c g(n)$ for all $n \geq n_0$;
$f(n)$ is in $\Omega(g(n))$ if $g(n)$ is in $\bigO(f(n))$;
and $f(n)$ is in $\Theta(g(n))$ if $f(n)$ is in $\bigO(g(n))$ and $f(n)$ is in $\Omega(g(n))$.
Lastly, $f(n)$ is in $\littleO(g(n))$ if for all $c>0$, there exists $n_0$ such that for all $n > n_0$, $f(n) <c g(n)$;
and $f(n)$ is in $\omega(g(n))$ if for all $c>0$, there exists $n_0$ such that for all $n > n_0$, $f(n) > c g(n)$.

\subsection{Automata models}

Next, we define the three types of stack automata models discussed in this paper.

\begin{definition}
A one-way nondeterministic \emph{stack automaton} (\SA\ for short)
is a \mbox{6-tuple} $M = (Q, \Sigma, \Gamma, \delta, q_0, F)$, where:
\begin{itemize}[nosep]
\item $Q$ is the finite set of \emph{states}.
\item $\Sigma$ and $\Gamma$ are the \emph{input} and \emph{stack} alphabets, respectively.
\item $\Gamma$ contains symbols $\eml$ and $\emr$, which represent the \emph{bottom} and \emph{top} of the stack.
We denote by $\Gamma_0$ the alphabet $\Gamma \setminus \{ \eml, \emr \}$.
\item $q_0 \in Q$ and $F\subseteq Q$ are the \emph{initial state} and the set of \emph{final states}, respectively.
\item $\delta$ is the nondeterministic \emph{transition function}
from $Q \times (\Sigma \cup \{ \lambda \} ) \times \Gamma$
into subsets of $Q \times \setb{\istay, \ipush{x}, \ipop, -1, 0, +1}{x \in \Gamma_0}$.
We use the notation $(q,a,y) \goesto (p, \iota)$ to denote that $(p, \iota) \in \delta(q,a,y)$.
\end{itemize}
\end{definition}

A \emph{configuration} $c$ of a \SA\ is a triple $c=(q, w, \gamma)$,
where $q \in Q$ is the current state, $w \in \Sigma^*$ is the remaining input to be read,
and $\gamma$ is the current working tape (stack). The word $\gamma$ has to be either in $\eml \Gamma_0^* \rw \Gamma_0^* \emr$, or in $\eml \Gamma_0^* \emr \rw$.
The symbol~$\rw$ denotes the position of the stack head, which is currently scanning 
the symbol directly preceding it (it is not considered an element of $\Gamma$).
We shall occasionally refer to the ``pure'' stack content, that is, the word $\gamma$ without the end markers and the head symbol.
We denote this word by $\purestack{\gamma}$.
The {\em stack size} in the configuration $c$ is $\stacksize{c} = \strlen{\purestack{\gamma}} = \strlen{\gamma} -3$. 

We use two relations between configurations:
\begin{itemize}[nosep]
\item The \emph{write relation}:
If~$(q, a, x) \goesto (p, \iota)$,
where $q, p \in Q$, $a \in \Sigma \cup \{\lambda\}$, $x \in \Gamma_0 \cup \{ \eml \}$,
and $\iota \in \{ \istay, \ipush{y}, \ipop \}$ for $y \in \Gamma_0$;
then for $u \in \Sigma^*$, $\gamma \in \Gamma^*$, with $\gamma x \emr \in \eml \Gamma_0^* \emr$:
\begin{itemize}[nosep]
	\item $(q, a u, \gamma x \rw \emr) \movew (p, u, \gamma x \rw \emr)$ if $\iota = \istay$,
	\item $(q, a u, \gamma x \rw \emr) \movew (p, u, \gamma x y \rw \emr)$ if $\iota = \ipush{y}$,
	\item $(q, a u, \gamma x \rw \emr) \movew (p, u, \gamma \rw \emr)$ if $\iota = \ipop$ and $x \neq \eml$.
\end{itemize}
Notice that
the write relation is defined only if~\istay, \ipush{}, and~\ipop\ transitions are performed
when the stack head is scanning the topmost symbol of the stack.
If one of these operations is executed when the stack head is not on the top of the stack,
the next configuration is not defined.
\item The \emph{read relation}:
If~$(q, a, x) \goesto (p, \iota)$,
where $q, p \in Q$, $a \in \Sigma \cup \{\lambda\}$, $x \in \Gamma$,
and $\iota \in \{ -1,0,1 \}$;
then for $u \in \Sigma^*$, $\gamma_1, \gamma_2 \in \Gamma^*$, with $\gamma_1 x \gamma_2 \in \eml \Gamma_0^* \emr$:
\begin{itemize}[nosep]
	\item $(q, a u, \gamma_1 x \rw \gamma_2) \mover (p, u, \gamma_1 \rw x \gamma_2)$ if $\iota = -1$ and $x \neq \eml$,
	\item $(q, a u, \gamma_1 x \rw \gamma_2) \mover (p, u, \gamma_1 x \rw \gamma_2)$ if $\iota = 0$,
	\item $(q, a u, \gamma_1 x \rw \gamma_2) \mover (p, u, \gamma_1 x y \rw \gamma_2')$ if $\iota = +1$ and $\gamma_2 = y \gamma_2', y \in \Gamma$.
\end{itemize}
Note that the working tape head is not allowed to move to the left of the symbol $\eml$,
but it can move to the right of the symbol $\emr$, to detect the position at the top of the stack.
\end{itemize}
The union of $\movew$ and $\mover$ is denoted by $\move$.
The transitive closures of~$\movew$, $\mover$, and $\move$ are denoted by $\movewp$, $\moverp$, and $\movep$;
and their transitive and reflexive closures by $\movews$, $\movers$, and $\moves$, respectively.

A {\em partial computation} of the automaton $M$ on an input word $u$ is a sequence of configurations
\begin{equation}
\CC: \overbrace{(p_0,u_0, \gamma_0)}^{c_0} \move \cdots \move \overbrace{(p_n,u_n, \gamma_n)}^{c_n},
\label{configchain}
\end{equation}
where $p_0 = q_0, u_0 = u, \gamma_0 = \eml \rw \emr$.
If also $u_n = \lambda$, we say that this is a {\em computation};
and furthermore, if also $p_n \in F$ then it is an {\em accepting computation}.
The {\em stack size} of the (partial) computation $\CC$, denoted by $\stacksize{\CC}$, is defined as
$\max \setb{\stacksize{c_j}}{0 \leq j \leq n}$.

The {\em language accepted} by an automaton $M$, denoted by~$\lang(M)$,
is the set of words $w$ on which $M$ has an accepting computation.
The {\em store language} of $M$, $\stlang(M)$,
is the set of state and stack contents that can appear in some accepting computation:
$\stlang(M) = \setb{q \gamma}{(q, u, \gamma) \text{ is a configuration in some accepting computation of } M}$.
Notice that these words contain both the current state in the configuration and the stack head position marker $\rw$.
It is known that the store language of every stack automaton is a regular language \cite{KutribCIAA2016,StoreLanguages}.

The accepting computation in (\ref{configchain}) can be written uniquely as
$$c_0 \movews d_1 \moverp c_1 \movewp \cdots \movewp d_m \movers c_m.$$
We call a sequence of transitions $c_i \movews d_{i+1}$ a {\em write phase},
and a sequence of transitions $d_i \movers c_i$ a {\em read phase}.
By this definition, a computation always starts with a write phase and ends with a read phase,
even if one or both of these phases only contains a single configuration.

A stack machine is said to have \emph{partitioned states}
if the state set $Q$ is a disjoint union of two sets $Q_w$ and $Q_r$,
with $F \subseteq Q_r$,
and all transitions are
\begin{itemize}[nosep]
\item either of the form $(q_w, a, y) \goesto (p, \iota)$, with $q_w \in Q_w$, $p \in Q$, and $\iota~\in~\{\istay, \ipush{x}, \ipop \}$;
\item or of the form $(q_r, a, y) \goesto (p, \iota)$, with $q_r \in Q_r$, $p \in Q$, and $\iota~\in~\{-1, 0, 1\}$.
\end{itemize}
We denote states in $Q_w$ (resp. $Q_r$) \emph{write states} (resp. \emph{read states}),
and the transitions of the two kinds respectively \emph{write transitions} and \emph{read transitions}.
Note that the current state in a computation dictates whether the next transition to be taken is a write or a read transition.

Additionally, note that for any \SA\ $M = (Q, \Sigma, \Gamma, \delta, q_0, F)$
we can construct a \SA\ $M'~=~(Q', \Sigma, \Gamma, \delta', {q_0}_w, F')$
with partitioned states accepting the same language as $M$.
We do this by duplicating all states of $M$ and only allowing transitions which match the current state:
\begin{itemize}[nosep]
\item Let $Q' = Q_w \cup Q_r$, with $Q_w = \setb{q_w}{q \in Q}$ and $Q_r = \setb{q_r}{q \in Q}$;
\item let $F' = \setb{q_r}{q \in F}$;
\item and let $\delta'$ be a union of two transition functions:
\begin{itemize}[nosep]
\item $\delta_w$, which contains transitions $( q_w, a, y ) \goesto (p_w, \iota)$ and $( q_w, a, y ) \goesto (p_r, \iota)$;
where $(q, a, y) \goesto (p, \iota)$ in $\delta$, and $\iota \in \{\istay, \ipush{x}, \ipop \}$; and
\item $\delta_r$, which contains transitions $( q_r, a, y ) \goesto (p_w, \iota)$ and $( q_r, a, y ) \goesto (p_r, \iota)$;
where $(q, a, y) \goesto (p, \iota)$ in $\delta$, and $\iota \in \{-1, 0, 1\}$.
\end{itemize}
\end{itemize}
In order to simulate an accepting computation of $M$,
the machine $M'$ can in every step nondeterministically move to either a write or a read state
so that the next step can be performed.
Conversely, in any accepting computation of $M'$,
if we ignore the subscripts ${}_w$ and ${}_r$, we obtain an accepting computation of $M$.
Therefore $M'$ and $M$ accept the same language.

Finally, in addition to stack automata, we consider two further restricted models:
A stack automaton is called \emph{non-erasing} (\NESA)
if it contains no transitions to an element of $Q \times \{ \ipop \}$.
A non-erasing stack automaton is called a {\em checking stack automaton} (\CSA)
if it also has partitioned states and it contains no transitions from a read state to a write state.
Every accepting computation of a checking stack automaton therefore has a single write phase followed by a single read phase.

We denote by $\langfam(\SA)$, $\langfam(\NESA)$, and $\langfam(\CSA)$
the families of languages accepted by the three types of devices.

\section{Complexity Measures on Stack Automata}

We shall consider three different measures of space complexity for stack automata,
defined similarly as for Turing machines \cite{GiovanniTMs}.
Consider an input word $u \in \Sigma^*$ to a stack automaton $M$:

\begin{align*}
\intertext{The {\bf weak} measure:}
\wspace_M(u) &=
\begin{cases}
	\min \setb{\stacksize{\CC}}{\CC \text{ is an accepting computation of $M$ on $u$}}, &\text{if $u \in \lang(M)$;}\\
	0, & \text{if $u \notin \lang(M)$.}
\end{cases}
\intertext{The {\bf accept} measure:}
%
\aspace_M(u) &=
\begin{cases}
	&\mkern-40mu \max \setb{\stacksize{\CC}}{\CC \text{ is an accepting computation on $u$}},\\
	&\text{if $u \in L(M)$ and this maximum exists;}\\[7pt]
	\infty, &\parbox{10cm}{if $u \in \lang(M)$ and $M$ has accepting computations on $u$\\
		using arbitrarily large stack size;}\\[7pt]
	0, &\text{if $u \notin \lang(M)$.}\\
\end{cases}
\intertext{The {\bf strong} measure:}
\sspace_M(u) &=
\begin{cases}
	&\mkern-40mu \max \setb{\stacksize{\CC}}{\CC \text{ is a partial computation on $u$}},\\
	&\text{if this maximum exists;}\\[7pt]
	\infty, &\parbox{10cm}{if $M$ has partial computations on $u$\\
		using arbitrarily large stack size.}
\end{cases}
\end{align*}
The weak measure describes the minimum stack space required to accept the word $u$,
the accept measure describes the maximum stack space used in any accepting computation on $u$,
and the strong measure describes the maximum stack space used in any partial computation on $u$.

We extend each of the three measures to describe the space complexity of a machine
as a function of the length of the input.
Thus, for each $z \in \{{\rm w}, {\rm a}, {\rm s}\}$, we define the functions:
\begin{align*}
\zspace_M(n) &= \max \setb{\zspace_M(u)}{u \in \Sigma^* \text{ and } \strlen{u} = n}, \text{ and}\\
\zlspace_M(n) &= \max \setb{\zspace_M(u)}{u \in \Sigma^* \text{ and } \strlen{u} \leq n}.
\end{align*}
The latter essentially forces the space complexity to be a non-decreasing function.
We use the shorthand phrase ``weak complexity'' to mean ``space complexity using the weak measure'',
and similarly for the accept and strong measures.
We omit the symbol $M$ if the machine is clear from the context.

Using this notation, we can now write $\zspace(n) \in \bigO(f(n))$, $\littleO(f(n))$, $\Omega(f(n))$, etc.,
for some integer function $f(n)$ in the usual fashion.

Note that for $z \in \{{\rm a}, {\rm s}\}$, if there is even one word $u$ with $\zspace(u) =\infty$,
then $\zspace(n) = \infty$ for $n = \strlen{u}$, $\zlspace(n) = \infty$ for $n \geq \strlen{u}$,
and $\zspace(n)$ cannot be in $\bigO(f(n))$ for any integer function $f$. 
If this is the case, then we say that $M$ is {\em $z$-unlimited}, otherwise it is {\em $z$-limited}.

\noindent
\begin{example}
\label{composite}
Consider the language
$$\Lcomp = \setb{1^n}{n \text{ is a composite number}}.$$
One possible \CSA\ $M$ accepting $\Lcomp$ first non-deterministically guesses a number $d > 1$ in unary notation on the working tape.
The machine then enters read mode and moves the head back to the left end of the tape.
Next, the automaton simultaneously reads the working tape and the input tape,
rewinding back to the start of the working tape every time its end is reached, without reading any input symbols.
If the automaton reaches the end of the input simultaneously with reaching the end of the working tape,
the number of symbols $d$ on the working tape divides the number of symbols $n$ on the input.
If $M$ has traversed the working tape at least twice, then it accepts.
Observe the following facts:
\begin{itemize}
\item For any prime number $n$, $\wspace_M(n) = \aspace_M(n) = 0$,
as no string of this length is in $\lang(M)$.
\item For any composite $n$, $\wspace_M(n)$ is the smallest divisor of $n$ greater than one.
\item For any composite $n$, $\aspace_M(n)$ is the largest divisor of $n$ smaller than $n$.
\item However, $\sspace_M(n) = \infty$ for any $n$.
The automaton can guess an arbitrarily large number, and then reject when it fails to divide the input.
Therefore $M$ is ${\rm s}$-unlimited.
\end{itemize}
In terms of asymptotic behavior, first consider the accept measure.
It is clearly in~$\bigO(n)$.
However, for every prime $n$, we have $\aspace_M(n) = 0$,
so this measure is not in $\Omega(n)$.
\footnote{In fact, it is not in $\Omega(f(n))$ for any function $f(n)$ which is in $\omega(1)$.}
On the other hand, for even $n>2$, we have $\aspace_M(n) = n/2$.
Therefore this measure is not in $\littleO(n)$ either.
In this sense the measure grows ``almost linearly'', even though it is not in $\Theta(n)$.
Hence the introduction of the non-decreasing function $\alspace_M(n)$.
Its value for this machine is $n/2$ for even $n>2$, $(n-1)/2$ for odd $n>3$.
Thus this function $\alspace_M(n)$ is in $\Theta(n)$.
Finally, observe that the weak measure $\wlspace_M(n)$ is in $\bigO(\sqrt{n})$,
but it can not be bounded by a constant value.
\MAR{Rephrase}
\end{example}

\section{Space Complexities of Stack Automata}

It is known that for every stack automaton $M$
there exists another stack automaton $M'$
such that $\lang(M) = \lang(M')$ and $\wspace_{M'}(n)$ is in $\bigO(n)$ \cite{StackAutomataCS}.
For the restricted model of checking stack automata,
we can show a stronger result:
the weak space complexity of any \CSA\ is always in $\bigO(n)$.
Furthermore, if a \CSA\ is a-limited (resp. s-limited) then also the accept (resp. strong) complexity is in $\bigO(n)$.

\begin{proposition}
\label{prop-csa-linear}
Let $M = (Q, \Sigma, \Gamma, \delta, q_0, F)$ be a \CSA.
Then $\wspace_M(n)$ is in $\bigO(n)$.
\end{proposition}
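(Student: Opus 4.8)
The plan is to show that for every accepted word $u \in \lang(M)$ there is an accepting computation whose stack size is $\bigO(|u|)$; since $\wspace_M(u)$ is by definition the minimum stack size over all accepting computations on $u$, this immediately gives $\wspace_M(n) \in \bigO(n)$. Fix such a $u$ and, among all accepting computations on $u$, choose one $\CC$ of smallest stack size $\ell = \stacksize{\CC}$. Because $M$ is a \CSA, $\CC$ factors uniquely as a single write phase $c_0 \movews d$ that reads a prefix $u_1$ and builds the pure stack content $x_1\cdots x_\ell$, followed by a single read phase $d \movers c$ that reads the suffix $u_2$ (so $u = u_1 u_2$) while the head moves two-way over the now-fixed stack. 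Since the stack never shrinks and is frozen during reading, the maximal height is attained at the end of the write phase, so it suffices to bound $\ell$.

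First I would normalise $\CC$. Any read-phase segment that returns to the same state and head position without consuming input is a pure $\lambda$-loop; deleting it changes neither the stack content nor $u_2$, so I may assume $\CC$ contains no such loop. A standard consequence is that, at each boundary between two stack cells, any triple (state, head position, direction of crossing) can recur only after at least one further input symbol has been read; hence every boundary is crossed only boundedly often per consumed input symbol.

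The heart of the argument is a crossing-sequence pumping applied simultaneously to both phases. Call a stack position $i$ \emph{charged} if some input symbol is read while the head sits at cell $x_i$, whether while $x_i$ is pushed in the write phase or scanned in the read phase; there are at most $|u_1| + |u_2| = |u|$ charged positions. The uncharged positions split into maximal \emph{free} regions in which no input is read in either phase. Assign to each boundary the profile consisting of the write state and stack symbol that created the adjacent cell, together with the read-phase crossing sequence at that boundary. Inside a free region both phases are input-free, so (after loop removal) each crossing sequence has length at most $2|Q|$, and therefore only constantly many profiles occur. If two boundaries of one free region shared a profile, I could excise the stack block lying between them and splice both the write phase (the two boundaries agree on state and top symbol, so the push sequence reroutes) and the read phase (the two crossing sequences coincide, so the two-way trajectory reroutes), yielding a shorter accepting computation on the \emph{same} $u$; because the region is free the splice consumes no input on either side, so $u_1$ and $u_2$ are preserved. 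This contradicts minimality of $\ell$. Hence every free region has length bounded by a machine-dependent constant $K$, and since there are at most $|u|+1$ free regions we obtain $\ell \le |u| + (|u|+1)K \in \bigO(|u|)$.

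The main obstacle is making this simultaneous excision rigorous: the write phase and the read phase constrain the stack content in quite different ways, so the pump must shorten both consistently while reproducing exactly the word $u$. Confining the excision to free regions is precisely what makes this possible, since there no input is read on either side and the input-free crossing sequences are short enough ($\le 2|Q|$) to be exhausted by the pigeonhole principle, whereas the charged positions, being in bijection with at most $|u|$ input reads, contribute only the linear term. Verifying the crossing-sequence splicing lemma in this two-phase setting, and checking that $\lambda$-loop removal really does bound the free-region crossing sequences by $2|Q|$, are the technical points that will require the most care.
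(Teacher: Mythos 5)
Your overall strategy---fix a stack-minimal accepting computation, isolate the maximal stack sections in which no input is read during either phase, and pump such a section down to contradict minimality---is exactly the strategy of the paper's proof (there these sections are called \emph{critical sections}). However, your key counting step is wrong. You claim that inside a free region ``both phases are input-free, so (after loop removal) each crossing sequence has length at most $2|Q|$,'' and deduce that only constantly many profiles occur. A free region is input-free only in the \emph{spatial} sense: no input is consumed while the head is physically inside it. Nothing prevents the read head from entering the region, leaving it, reading input somewhere \emph{outside}, and returning---arbitrarily often. Your $\lambda$-loop removal does not cap this: two crossings of the same interior boundary with the same state and direction may be separated by an input read occurring outside the region, so the segment between them is not a $\lambda$-loop and cannot be deleted. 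Concretely, consider a \CSA\ that writes $a^m$, then in its read phase alternately consumes one input symbol while sitting at the top of the stack and makes a full input-free sweep down to $\eml$ and back. Every interior boundary then has a crossing sequence of length $2k$ for input length $k$, and the computation contains no $\lambda$-loops at all. Hence crossing sequences at interior boundaries can have length $\Theta(n)$, the number of your profiles grows (exponentially) with $n$ rather than being a machine-dependent constant, the pigeonhole yields no constant bound $K$ on the length of free regions, and the final estimate $\ell \le |u| + (|u|+1)K$ collapses.

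The repair is to compare boundaries not by exact crossing sequences but by \emph{finite behavior summaries} of input-free visits, and this is precisely what the paper does. Each maximal visit of the head to a critical section is a $\lambda$-computation entering at one end and exiting at one end, so up to the choice of entry state, exit state, entry side and exit side there are at most $4|Q|^2$ visit types; the content of the section matters only through which types it supports, i.e., through membership in at most $4|Q|^2$ languages of input-free two-way \NFA s $N_k$, plus one one-way \NFA\ language $N_0$ for the write phase. The paper converts each $N_k$ to a one-way \NFA\ (via Kapoutsis \cite{Kapoutsis}), intersects everything into one regular language $V$ of state complexity below a constant $\kappa$, and applies the pumping lemma to shorten the section within $V$; since the shortened word still lies in every $L(N_k)$ and in $L(N_0)$, every visit and the write phase can be rerouted. (Equivalently, you could salvage your pigeonhole by tagging each boundary with the two-way transition relation of the region's prefix together with the write-phase \NFA\ state---a constant amount of data---rather than with the unbounded crossing sequence.) A smaller point you also left open: the accepting computation may end with the head strictly inside a free region, in which case excising a block can delete the accepting configuration even when the boundary data match; the paper sidesteps this by first normalizing $M$ so that it accepts with its head scanning the right end marker $\emr$.
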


\begin{proof}
First, without loss of generality, assume that the machine $M$ ends its computation
with the working head scanning the right end marker.
An automaton can be easily modified to do this without affecting its space complexity.

Let $\kappa = |Q| \cdot |\Gamma| \cdot (2^{|Q| + 1})^{4|Q|^2} + 1$.

For a contradiction assume that $M$ has a minimal (w.r.t. stack size) accepting computation $\CC$
on some word $w$ of length $n$, which uses more than $(\kappa + 1) n$ cells on the stack.
Let $s$ be the pure content of the stack in the final configuration of $\CC$.
A \emph{critical section} of $s$ is a substring of $s$,
such that the machine $M$ never reads an input symbol while the stack head is scanning one of the symbols in this section.
Since $M$ only reads $n$ input symbols, $s$ contains a critical section with length more than $\kappa$.
Denote this critical section by $u$, with $u = au'b$, $s = xau'by$, $a, b \in \Gamma$, $x, u', y \in \Gamma^*$.
The only occasions where $M$ accesses the symbols of $u$ are:
\begin{itemize}
\item
In the write phase, when it pushes $u$ on the stack:
\begin{align}
c_{i_0} = (q_{i_0}, w_0, \eml xa \rw \emr) \movews (q_{j_0}, w_0, \eml xau'b \rw \emr) = c_{j_0}.
\label{eqn-trans0}
\end{align}
\item
In the read phase, when it passes through the section using $\lambda$-transitions.
The automaton can begin at either end of the section and leave it at the same or the other end:
\begin{align}
c_{i_k} = (q_{i_k}, w_k, \eml xa \rw u'by \emr) &\movers (q_{j_k}, w_k, \eml x \rw au'by \emr) = c_{j_k} 
\label{eqn-trans1}\\
c_{i_k} = (q_{i_k}, w_k, \eml xa \rw u'by \emr) &\movers (q_{j_k}, w_k, \eml x au'b \rw y \emr) = c_{j_k}
\label{eqn-trans2}\\
c_{i_k} = (q_{i_k}, w_k, \eml xa u' \rw by \emr) &\movers (q_{j_k}, w_k, \eml x \rw au'by \emr) = c_{j_k}
\label{eqn-trans3}\\
c_{i_k} = (q_{i_k}, w_k, \eml xa u' \rw by \emr) &\movers (q_{j_k}, w_k, \eml x au'b \rw y \emr) = c_{j_k}
\label{eqn-trans4}
\end{align}
Note that the bounding configurations are chosen so that the subsequence begins after the head enters $u$,
and ends after it completely leaves the section.
\end{itemize}
By our definition of a critical section, all the transitions above are $\lambda$-transitions
and do not depend on the input word.
Our goal in the following shall be to find a replacement word $v = av'b \in \Gamma^*$, with $\strlen{v} < \strlen{u}$,
and replace each of the above subcomputations by a different sequence of valid transitions
such that the stack at the end of the entire computation will be $s' = xav'by$.
This is a contradiction with the original computation being minimal.

Since the machine $M$ does not access the input at all in any of the critical subcomputations,
we aim to simulate its functionality by an \NFA\ in the write phase
and by a two-way \NFA\ in the read phase which use the working tape as their input tape.

\begin{claim}
Let $N_0 = ((Q \times \Gamma), \Gamma, \delta_0, (q_{i_0},a), \{ (q_{j_0},b) \})$ be an \NFA\ with $\lambda$-transitions
with its transition function defined as follows:
\begin{align*}
((p, d), c) \goesto (q, c) &\text{ in $\delta_0$ if and only if } (p, \lambda, d) \goesto (q, \ipush{c}) \text{ in $\delta$, and} \\
((p, d), \lambda) \goesto (q, d) &\text{ in $\delta_0$ if and only if } (p, \lambda, d) \goesto (q, \istay) \text{ in $\delta$.}
\end{align*}

Then a word $v = av'b$ is accepted by $N_0$
if and only if $M$ has some valid sequence of transitions
$$(q_{i_0}, w_0, \eml xa \rw \emr) \movews (q_{j_0}, w_0, \eml xav'b \rw \emr).$$
\end{claim}
Note: The machine $N_0$ needs to remember the last symbol it has read in its state control.
This corresponds to the last symbol pushed on the top of the stack of $M$.

For the next claim, we use two-way $\NFA$s which have input surrounded by end-markers $\eml$ and $\emr$, and 
transitions that move left ($\mathrm{L})$, right ($\mathrm{R})$, or stay ($\mathrm{S})$.
\begin{claim}
Let $N_k = (Q \cup \{ q_a \}, \Gamma, \delta_k, q_{i_k}, \{ q_a \})$ be a two-way \NFA\
with a transition function defined as follows:
\begin{align*}
(p, c) \goesto (q, \mathrm{L}) &\text{ in $\delta_k$ if and only if } (p, \lambda, c) \goesto (q, -1) \text{ in $\delta$,} \\
(p, c) \goesto (q, \mathrm{S}) &\text{ in $\delta_k$ if and only if } (p, \lambda, c) \goesto (q, 0) \text{ in $\delta$,} \\
(p, c) \goesto (q, \mathrm{R}) &\text{ in $\delta_k$ if and only if } (p, \lambda, c) \goesto (q,+1) \text{ in $\delta$, and} \\
(q_{j_k}, \eml) \goesto (q_a, \mathrm{S})
\end{align*}

Then a word $v = av'b$ is accepted by $N_k$
if and only if $M$ has some valid sequence of transitions
$$(q_{i_k}, w_k, \eml xa \rw v'by \emr) \movers (q_{j_k}, w_k, \eml x \rw av'by \emr).$$
\end{claim}
Note: This automaton deals with case (\ref{eqn-trans1}).
Similar automata for the other three cases (\ref{eqn-trans2}) - (\ref{eqn-trans4}) can be obtained
by checking for the right end marker in the accepting state;
or by first moving the head to read the last symbol of the input word.

Continuing with the proof of Proposition \ref{prop-csa-linear}:
Observe that the construction of the automata $N_0$ and $N_k$
only depends on the machine $M$, not on the input word.
Further, there can be at most $4 |Q|^2$ distinct automata $N_k$,
since their construction differs only in the states $q_{i_k}$ and $q_{j_k}$,
and in the direction the machine enters and leaves the critical section in.
The constructed \NFA\ $N_0$ and the two-way \NFA s $N_k$ accept regular languages.
Since there are only $4 |Q|^2 + 1$ such languages under consideration,
the intersection of all these languages must be regular as well.
Denote this language by $V$.

Let us examine the number of states required by a one-way \NFA\ accepting $V$.
The language $N_0$ can be accepted by a $(|Q| \cdot |\Gamma|)$-state \NFA.
Each of the languages $N_k$ can be accepted by a $(|Q| + 1)$-state two-way \NFA,
for which we can construct an equivalent one-way \NFA\ with $\binom{2(|Q| + 1)}{|Q| + 2}$ states; see \cite{Kapoutsis}.
For our argument it is enough that this is less than $2^{|Q| + 1}$.
Finally, the state complexity of an intersection of several \NFA s is bounded by the product of their sizes.
In our case, we can construct a one-way \NFA\ accepting $V$
using at most $|Q| \cdot |\Gamma| \cdot (2^{|Q| + 1})^{4|Q|^2} < \kappa$ states.

Now recall the discussion of the critical section $u$.
As we have $u \in V$, $V$ can be accepted by an \NFA\ with at most $\kappa$ states, and $\strlen{u} > \kappa$,
we can use the pumping lemma for regular languages to partition the string $u$
into three substrings $u = u_1 u_2 u_3$,
such that $\strlen{u_1 u_3} < \kappa$, $\strlen{u_2} > 0$, and $v = u_1 u_3$ is in $V$.

As $v$ is accepted by $N_0$ and all machines $N_k$,
by the above results we can replace each of the sequences of transitions (\ref{eqn-trans0}) - (\ref{eqn-trans4})
by the transitions given as a result of one of the above lemmas.
And since the automaton never accesses the critical section of the stack other than in those five cases,
the resulting computation is still a valid accepting computation of $M$.
However, this computation only uses a stack of size $\strlen{ xvy } < \strlen{x} + \kappa + \strlen {y} \leq \strlen{ xuy }$,
which is a contradiction with our assumption that the original computation was minimal with respect to stack size.
\end{proof}

We obtain a similar result for the accept and strong measures.
However, consider a checking stack automaton accepting $\Sigma^*$
which first non-deterministically pushes an arbitrary word on the stack, then reads the entire input, and accepts.
This machine has an unlimited accept and strong complexity.
Thus the result of Proposition \ref{prop-csa-linear} cannot always hold for these two measures.
But we shall prove that as long as this complexity is limited, it still has to be at most linear.
Further, we shall see in Proposition \ref{decidablelimited} that it is decidable whether a \CSA\ is a- or s-limited or not.

\begin{proposition}
\label{acceptupperbound}
Let $M = (Q, \Sigma, \Gamma, \delta, q_0, F)$ be a $\CSA$.
For $z \in \{{\rm a}, {\rm s} \}$,
if $M$ is $z$-limited, then $\zspace_M(n)$ is in $\bigO(n)$.
\end{proposition}
\begin{proof}
Let $\kappa$ be the constant from the proof of Proposition \ref{prop-csa-linear}.
For a contradiction, assume that there exists some string $w$ of length $n$
and some accepting computation (for the accept measure) or some computation (for the strong measure)
of $M$ on $w$ which uses more than $(\kappa + 1)n$ tape cells.
Similarly to the proof of the previous result,
we can find a critical section $u$ of the final stack contents $s$ in this computation,
such that the machine never reads any input while the stack head is scanning a symbol in this section,
and $\strlen{u} > \kappa$.

Also, like in the previous proof, using the pumping lemma we can write $u$ as $u_1 u_2 u_3$,
such that now $\strlen{u_1 u_3} < \kappa$, $\strlen{u_2} > 0$, and $v = u_1 u_2^i u_3 \in V$ for any $i \in \NNZ$.
However, following the previous proof, we can again obtain an accepting computation (resp. computation)
of $M$ on $w$ in which the critical section $u$ of the stack is replaced by $v$.
And since we can choose the string $v$ to be of arbitrary length,
this is a contradiction with the assumption that $M$ is $z$-limited.

Therefore every accepting computation (resp. every computation) on every string of length $n$
uses at most $(\kappa + 1)n$ working tape cells,
and thus $\zspace_M(n)$ is in $\bigO(n)$.
\end{proof}

Whether or not the result above holds for $\NESA$ and $\SA$ generally is an open problem.

Lower bounds on the space complexity functions can also be studied similarly to upper bounds. The next proof starts with an accepting computation using some stack word, and then finds a new accepting computation on some possibly different input word that is roughly linear in the size of the stack. It then uses the regularity of the store languages of stack automata in order to determine that for every increase in some constant, there is at least one more final stack word of that length that has an input word whose size is linear in the size of that stack contents. That is enough to show that the accept and strong space complexities cannot be $o(n)$, and if the non-decreasing function $\zlspace$ is used, then it is at least linear.

\MAR{Note: I think I can prove this also for w measure using a different method.
Please skip this proof for now.}

\begin{lemma}
\label{lem-csa-atleastlin}
Let $z \in \{{\rm a}, {\rm s}\}$.
Let $M = (Q,\Sigma,\Gamma, \delta,q_0,F)$ be a $\CSA$ such that $\zspace(n) \notin O(1)$ and $M$ is $z$-limited. Then
$\zspace(n)$ cannot be $o(n)$, and
$\zlspace(n) \in \Omega(n)$.
\end{lemma}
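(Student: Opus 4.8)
The plan is to prove the statement by \emph{constructing}, for arbitrarily large stack heights, accepting computations (partial computations, when $z = {\rm s}$) whose input length is only a constant factor larger than the stack they build. First I would record that, since $\zspace(n)\notin\bigO(1)$, the machine $M$ admits accepting computations using arbitrarily large stacks when $z={\rm a}$, so $\stlang(M)$ contains words $q\gamma$ with unbounded $\stacksize{\cdot}$. When $z={\rm s}$ the large stacks may occur in non-accepting partial computations, which $\stlang(M)$ does not see; here I would first pass to an auxiliary \CSA\ $M'$ obtained from $M$ by adding, from every state, a $\lambda$-transition into a single new final read state $q_f$ that merely consumes the remaining input and halts. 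Every partial computation of $M$ then extends to an accepting computation of $M'$ with the \emph{same} stack content, so all reachable stacks of $M$ appear in $\stlang(M')$, $M'$ remains $z$-limited (flushing never grows the stack), and it suffices to carry out the remaining argument on $M'$.

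Next I would fix a word $q\eml s\rw\emr\in\stlang(M)$ recorded at the boundary between the single write phase and the single read phase, so the head sits at the top of a full stack $s$ in state $q$. Because this configuration lies on an accepting computation, there is both a write phase reaching it from the start and a read phase completing it to acceptance, and the key step is to replace each half by a \emph{shortest} one. In the write phase I would delete every stay-loop, i.e.\ a segment returning to the same state with the stack unchanged, which for a non-erasing device forces zero pushes and hence a pure state-cycle; deleting it together with the at-most-one input symbol each of its transitions reads does not disturb the subsequent pushes, since write transitions depend only on the current state and top symbol. After all such deletions at most $|Q|$ stay-transitions survive between consecutive pushes, so a prefix $p$ with $\strlen{p}\in\bigO(\stacksize{s})$ already drives $M$ from the start to $(q,\cdot,\eml s\rw\emr)$. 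Over the now-fixed tape $\eml s\emr$ the read phase is exactly a two-way \NFA\ reading $M$'s input; I would take a shortest input it accepts from state $q$, and converting this $\tNFA$ on a tape of length $\stacksize{s}+2$ to a one-way \NFA\ with $\bigO(\stacksize{s})$ states (as in \cite{Kapoutsis}) shows a suffix $t$ with $\strlen{t}\in\bigO(\stacksize{s})$ suffices. Concatenating the two shortened phases yields an accepting computation of $M$ on an input $w'$ with $\strlen{w'}\in\bigO(\stacksize{s})$ that still builds the full stack $s$, so there is a constant $c$ depending only on $M$ with $\zspace(\strlen{w'})\ge\stacksize{s}\ge\strlen{w'}/c$.

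Finally I would use regularity of $\stlang(M)$ to make these stack heights \emph{dense}. Since the stack heights occurring in $\stlang(M)$ are unbounded and every word of $\stlang(M)$ has length $\stacksize{\cdot}+\bigO(1)$, pumping a sufficiently long such word yields boundary configurations with stacks $s_i$ of every length $\ell_0+ip$, $i\in\NNZ$, for fixed $\ell_0,p$. Applying the construction above to each $s_i$ gives inputs $w_i$ with $\stacksize{s_i}=\ell_0+ip$ and $\strlen{w_i}\le c(\ell_0+ip)$, so $\zspace(\strlen{w_i})\ge\strlen{w_i}/c$ for infinitely many distinct input lengths (distinct because $z$-limitedness keeps $\zspace$ finite at each length while $\stacksize{s_i}\to\infty$); this already gives $\zspace(n)\notin\littleO(n)$. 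For $\zlspace$, given large $n$ I would pick the largest $i$ with $c(\ell_0+ip)\le n$; then $\strlen{w_i}\le n$ while the stack has size $\ell_0+ip\ge n/c-p$, whence $\zlspace(n)\ge n/c-p\in\Omega(n)$.

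The main obstacle I anticipate is the bookkeeping ensuring the shortened input stays linear in $\stacksize{s}$ \emph{while exactly preserving} the stack $s$: one must verify that removing stay-loops and their input symbols never alters the sequence of pushes (it does not, by the locality of write transitions), and that the two-way read phase can always be driven to acceptance by an input linear in the tape length, which is precisely where the $\tNFA\to\NFA$ state bound is indispensable. The secondary subtlety is the reduction $M\rightsquigarrow M'$ for the strong measure, where I must confirm that flushing into $q_f$ neither increases any stack nor breaks $z$-limitedness, so that the bound proved for $M'$ transfers back to $M$.
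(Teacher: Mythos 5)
Your proposal is correct and follows essentially the same route as the paper's proof: shorten the write phase by cutting $\lambda$-segments that repeat a state and shorten the read phase via a bound on the shortest input accepted from the boundary configuration, so that some accepted input is linear in the stack height; then use regularity of $\stlang(M)$ plus the pumping lemma to get an arithmetic progression of achievable final-stack heights, and reduce the strong measure to the accept measure by making every partial computation accepting (your flush-state $q_f$ is a cleaner variant of the paper's ``make all states final''). One caveat: the appeal to Kapoutsis's two-way-to-one-way conversion is inapposite there, since that theorem's state bound depends only on the number of states (not the tape length) and concerns the language of tape contents, so it cannot bound the length of $M$'s input suffix; the fact you actually need and state --- a one-way \NFA\ over the input with $O(\stacksize{s})$ states obtained by pairing $M$'s state with the stack-head position, whence a shortest accepted suffix of length $O(\stacksize{s})$ --- is the elementary product construction, which is precisely the paper's argument of cutting repeated (state, position) pairs.
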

\begin{proof}
First, consider the case $z = {\rm a}$.

We shall show that there are constants $c',d'$ such that, for all final stack contents $\gamma$ that can appear in an accepting computation on $u = u_1 u_2$, where $u_1$ is the input read during the write phase, and $u_2$ is the input read during the
read phase, there are words $\bar{u}_1$ and $\bar{u}_2$ that can be derived from~$u_1$ and $u_2$ whereby $\bar{u}_1 \bar{u}_2$ is also accepted with an accepting computation with final stack $\gamma$ where $\bar{u}_1$ is the input read during the write phase, $\bar{u}_2$ is the input read during the read phase, and $\frac{|\bar{u}_1| +|\bar{u}_2|}{d'} \leq |\gamma|$
if $|\gamma|>0$ and $|\bar{u}_1|+ |\bar{u}_2| \leq c'$ if $|\gamma| = 0$. To find $\bar{u}_1$, do the following. Let $m = |Q|$. For every part of the derivation of $u_1$ where more than $m$ consecutive transitions are applied between writing letters to the stack, we can ``cut out'' those parts of $u_1$ between the same state, and make a smaller word $\bar{u}_1$ that satisfies $|\bar{u}_1| \leq m (|\gamma|+1)$.  We can similarly ``shrink'' $u_2$ into another word $\bar{u}_2$ as follows: if, while reading $u_2$, we hit the same position of $\gamma$ in the same state twice, we can cut out the parts of $u_2$ between those, which still gives an accepting computation. Then $|\bar{u}_2| \leq m (|\gamma|+1)$. Thus, the modified input $\bar{u}=\bar{u}_1 \bar{u}_2$ accepts while writing
$\gamma$, and when reading it, is of length at most $ 2 m (|\gamma|+1)$, and hence
$\frac{|\bar{u}|}{4m} \leq |\gamma|$ if $|\gamma| >0$ and $|\bar{u}| \leq 2m$ otherwise.

Furthermore, by Proposition \ref{acceptupperbound}, there are numbers $e$ and $n_0'$ such that $M$ must use a stack of at most size $e \cdot n$ on an input of size $n \geq n_0'$. For $\bar{u}$ with $|\bar{u}| \geq n_0'$, this implies $|\gamma| \leq e |\bar{u}|$. Thus, there exist $d,e \geq 0$ such that $d|\bar{u}| \leq |\gamma| \leq e|\bar{u}| $.
Hence,
we know that given any final stack word $\gamma$ that can appear in an accepting computation, there is some input word $u$ that gives $\gamma$ as the final stack such that $d|u| \leq |\gamma| \leq e|u|$.

In addition, we know that the store language of $M$, $S(M)$, is regular because the store language of every stack automaton (and hence every checking stack automaton) is a regular language \cite{StoreLanguages}. 
Furthermore, it is possible to restrict $S(M)$ to only stack contents that can appear at the end of write phases and
not partial contents of the stack, by intersecting the regular $S(M)$ with words that start with some final state
of $M$ (all words in $S(M)$ start with the state), and then to erase the read head symbol and the state symbol to
obtain $S'(M) \subseteq \Gamma^*$ of final stack contents that can appear in accepting computations. Therefore, $S'(M)$ is 
regular as well.

If $S'(M)$ is finite, then $\aspace(n)$ is $O(1)$, thus yielding a contradiction.
Thus, $S'(M)$ must be infinite, and by the pumping lemma there are strings $x$, $y$, $z$ with $\strlen{y} > 0$
such that $\gamma_i = x y^i z$ is in $S'(M)$ for all $i \in \NNZ$.
Let $c = \strlen{xz}$ and $p = \strlen{y}$.

Thus, there exists an infinite sequence $\gamma_0, \gamma_1, \ldots$ of final stack contents appearing in accepting computations with $|\gamma_i| = c + i p$ for all $i \geq 0$, 
and there exists constants $d,e$ such that, for each $\gamma_i$, there exists
$v_i \in L(M)$ which has an accepting computation with final stack contents $\gamma_i$, and
$d|v_i| \leq |\gamma_i| \leq e |v_i|$.
Thus, $\frac{|\gamma_i|}{e} \leq |v_i| \leq \frac{|\gamma_i|}{d}$. Hence, for each $i$, since
$\alspace(n)$ is non-decreasing, $\alspace( \lfloor \frac{|\gamma_i|}{d} \rfloor) \geq |\gamma_i|$;
that is, $\alspace( \lfloor \frac{c + i p}{d} \rfloor) \geq c + i p$, and $\alspace(j)$ does not decrease for any
$j > \lfloor \frac{c + i p}{d} \rfloor$.

Let $n_0 = \lfloor \frac{c}{d}\rfloor$.
We will prove next that for all $n > n_0$, $\alspace(n) \geq  \frac{d}{p+1} n$ which will prove
$\alspace(n) \in \Omega(n)$. Let $n > \frac{c}{d}$.
Then, $n$ is between
$\lfloor \frac{c + i p}{d} \rfloor$ and $\lfloor \frac{c + (i+1) p}{d} \rfloor -1$, for some $i$, and
$\alspace(n) \geq c + i p$. Letting $r = c + i p$, $n$ is between
$\lfloor \frac{r}{d} \rfloor$ and $\lfloor \frac{r + p}{d} \rfloor -1$ and $\alspace(n) \geq r$.
Thus, for any $n$ between $\lfloor \frac{r}{d} \rfloor$ and $ \frac{r(p+1)}{d}$, also $\alspace(n) \geq r$ since
$\alspace(n)$ is non-decreasing. Therefore, $\alspace(n) \geq n \frac{d}{p+1}$ for all $n>n_0$. Hence, $\alspace(n) \in \Omega(n)$.

A similar result follows from this for the strong measure $\sspace(n)$ by making a new machine from $M$ with all
states final and using the same analysis (it suffices to show it is at least linear on just computations rather than
all partial computations and making all states final will accomplish this).
\end{proof}

Lemma \ref{lem-csa-atleastlin} is the ``best possible result'' in that it is not always the case that~$\aspace(n) \in \Omega(n)$ since the space complexity can periodically go below linear infinitely often as demonstrated with Example \ref{composite}. But what this lemma says is that it returns to at least linear infinitely often as well. Furthermore, if one forces the complexity to be non-decreasing, then it is at least linear.
Example \ref{composite} also shows that the result of Lemma \ref{lem-csa-atleastlin} does not hold for the weak measure.

\begin{proposition}
\label{sqrt}
There exists a \CSA\ $M$ which accepts a non-regular language
such that $\wspace_M(n)$ is in $\bigO(\sqrt{n})$, but not in $\littleO(\sqrt{n})$.
\end{proposition}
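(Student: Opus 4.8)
The plan is to reuse the checking stack automaton $M$ accepting $\Lcomp = \setb{1^n}{n \text{ is composite}}$ from Example~\ref{composite}, and to read off its weak complexity directly from the arithmetic of $M$. The language requirement is immediate: the set of composite numbers is not ultimately periodic, so $\Lcomp$ is non-regular. All the remaining work is to pin down $\wspace_M(n)$ as a function of $n$ and compare it with $\sqrt{n}$. Note in particular that, because the input alphabet is unary, there is exactly one word of each length, so $\wspace_M(n)=\wspace_M(1^n)$ and no maximisation over distinct words of length $n$ is involved.

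First I would record the exact value of the measure established in Example~\ref{composite}: for composite $n$, $\wspace_M(1^n)$ equals the smallest divisor of $n$ exceeding $1$, i.e.\ the least prime factor of $n$, while for $n$ prime or $n=1$ it is $0$. For the upper bound I would then invoke the elementary fact that the least prime factor of a composite $n$ is at most $\sqrt{n}$; together with $\wspace_M(1^n)=0$ in the non-composite case, this yields $\wspace_M(n)\le\sqrt{n}$ for every $n$, hence $\wspace_M(n)\in\bigO(\sqrt{n})$ with constant $1$.

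For the lower bound I would specialise to the lengths $n=p^2$ with $p$ prime. The only divisors of $p^2$ are $1,p,p^2$, so the least divisor exceeding $1$ is $p$, giving $\wspace_M(p^2)=p=\sqrt{p^2}$. Since there are infinitely many primes, $\wspace_M(n)=\sqrt{n}$ for infinitely many $n$, and choosing the witnessing constant $c=1$ shows that $\wspace_M(n)\notin\littleO(\sqrt{n})$. Combining this with the upper bound of the previous paragraph completes the proof.

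The one genuinely non-trivial point, and the step I would treat most carefully, is the claim inherited from Example~\ref{composite} that the minimum stack height over all accepting computations on $1^{p^2}$ is exactly $p$ and not smaller. This is where the $\CSA$ restriction does the real work: every accepting computation must guess a number $d>1$, push $1^d$, and then confirm $d\mid n$ by repeatedly rewinding across the now-frozen stack, so the stack height of an accepting computation is precisely the guessed divisor $d$, and acceptance forces $d$ to be a proper divisor of $n$. I would verify that no alternative accepting strategy can use a shorter stack (for instance one that accepts without genuinely checking divisibility), so that the minimum over accepting computations really is the least proper divisor $>1$, which for $n=p^2$ is $p$. I would also remark, to explain why the statement reads ``$\bigO(\sqrt n)$ but not $\littleO(\sqrt n)$'' rather than $\Theta(\sqrt n)$, that a $\CSA$ cannot force the stack to be $\Theta(\sqrt n)$ on every input: once the read phase begins the stack is frozen, and a two-way head sweeping a length-$d$ stack while consuming a one-way input is ultimately periodic with period $\bigO(d)$, so it can only test congruence conditions such as $d\mid n$ and never an exact quadratic relation. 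This is exactly why $\wspace_M(n)$ dips well below $\sqrt{n}$ (to $2$ at even lengths, and to $0$ at primes) even though it returns to $\sqrt{n}$ along the squares of primes.
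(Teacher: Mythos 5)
Your proposal is correct and follows essentially the same route as the paper's own proof: it reuses the machine of Example~\ref{composite}, bounds $\wspace_M(n)$ above by the least prime factor of composite $n$ (hence by $\sqrt{n}$), and witnesses the failure of $\littleO(\sqrt{n})$ at the infinitely many lengths $n=p^2$ with $p$ prime. Your closing aside claiming that \emph{no} \CSA\ can achieve weak complexity $\Theta(\sqrt{n})$ on every input is not needed for the statement and is not established (the paper leaves such questions open), so it should be omitted or flagged as a conjecture rather than folded into the proof.
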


\begin{proof}
Consider the automaton $M$ described in Example \ref{composite} accepting the language $\Lcomp$.
For any prime number $n$, we have $\wspace_M(n) = 0$.
For a composite $n$, $\wspace_M(n)$ is the smallest divisor of $n$ greater than one,
which is at most $\sqrt{n}$.
This means that $\wspace_M(n)$ is in $\bigO(\sqrt{n})$.
For a number $n$ which is the square of a prime, we have $\wspace_M(n) = \sqrt{n}$.
Therefore there are infinitely many such numbers where this holds,
and thus $\wspace_M(n)$ is not in $\littleO(\sqrt{n})$.
\end{proof}

Putting together all results for $\CSA$ with either the accept or strong measure so far, we get the following complete characterization:
\begin{theorem}
\label{thm-csa-full}
Let $M$ be a $\CSA$. For $z \in \{{\rm a}, {\rm s}\}$, exactly one of the following must occur:
\begin{enumerate}[nosep]
\item $M$ is $z$-unlimited, and so there is no $f$ such that $\zspace(n) \in O(f(n))$.
(In other words, $\zspace$ takes on an infinite value for some input.)
In this case, $L(M)$ can be either regular or not.
\item $M$ is $z$-limited, $\zspace(n) \in O(1)$, and $L(M)$ is regular;
\item $M$ is $z$-limited, $\zspace(n) \in O(n), \zspace \notin o(n)$, and $\zlspace(n) \in \Theta(n)$ (and $L(M)$ can be either regular or not).
\end{enumerate}
\end{theorem}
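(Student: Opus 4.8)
The plan is to prove the trichotomy by first noting that the three cases are pairwise mutually exclusive and then showing that they are exhaustive, so that \emph{exactly} one occurs. Exclusivity is immediate from the definitions: case~1 asserts that $M$ is $z$-unlimited, whereas cases~2 and~3 both presuppose $z$-limitedness; and cases~2 and~3 cannot hold together, because any function in $\bigO(1)$ is also in $\littleO(n)$, so $\zspace(n) \in \bigO(1)$ is incompatible with $\zspace(n) \notin \littleO(n)$. It therefore remains to show that every $\CSA$ falls into at least one of the cases, which I would do by a two-level case split.

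The first dichotomy is whether $M$ is $z$-unlimited or $z$-limited. If it is $z$-unlimited, then by the remarks immediately preceding Example~\ref{composite} the value $\zspace(n)$ is $\infty$ for some input and cannot lie in $\bigO(f(n))$ for any integer function $f$; this is exactly case~1. The accompanying claim that $L(M)$ may be regular or not is witnessed by the two machines already discussed: the $\CSA$ accepting $\Sigma^*$ that guesses and pushes an arbitrary word before reading (regular, and $z$-unlimited for both $z$), and the machine of Example~\ref{composite} accepting the non-regular $\Lcomp$ (which is $\mathrm{s}$-unlimited). So suppose instead that $M$ is $z$-limited. Then Proposition~\ref{acceptupperbound} gives $\zspace(n) \in \bigO(n)$, and a short computation transfers this to the non-decreasing envelope: since $M$ is $z$-limited the finitely many short inputs contribute only a constant, whence $\zlspace(n) = \max\{\zspace_M(u) : |u| \le n\} \in \bigO(n)$ as well.

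Within the $z$-limited case I would then split on whether $\zspace(n) \in \bigO(1)$. If it is not, Lemma~\ref{lem-csa-atleastlin} applies and yields $\zspace(n) \notin \littleO(n)$ together with $\zlspace(n) \in \Omega(n)$; combined with the $\bigO(n)$ bound just established this gives $\zlspace(n) \in \Theta(n)$, which is precisely case~3 (and $L(M)$ can again be regular or not, e.g.\ $\Lcomp$ under the accept measure is non-regular and sits in this case). Conversely, if $\zspace(n) \in \bigO(1)$, I claim we are in case~2, and the only genuine work left is to verify that $L(M)$ is regular.

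For that step the key observation is that a constant bound on the accept (resp.\ strong) measure forces every relevant computation to use only boundedly much stack. For the strong measure, $\sspace(n) \in \bigO(1)$ together with $\mathrm{s}$-limitedness bounds the stack in \emph{all} partial computations by a single global constant $K$ (the asymptotic bound handles all but finitely many input lengths, and limitedness keeps the remaining finitely many values finite); for the accept measure the same conclusion holds for all \emph{accepting} computations. In either case only finitely many pure stack contents occur (words over $\Gamma_0$ of length at most $K$), so a configuration restricted to stack size at most $K$ has a bounded description. I would then build an $\NFA$ whose states encode the $\CSA$ state, the bounded pure stack contents, the stack-head position, and the write/read phase, simulating $M$ step for step but only along transitions that keep the stack within $K$; this $\NFA$ reads exactly the input of $M$ and accepts iff $M$ has a bounded-stack accepting computation, i.e.\ iff the input is in $L(M)$. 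Hence $L(M)$ is regular, completing case~2 and the theorem. I expect this last step to be the main obstacle: one must check that the stack bound is genuinely \emph{global} (in particular covering the short inputs, which the asymptotic $\bigO(1)$ statement alone does not control), so that the simulating $\NFA$ is finite and captures all of $L(M)$ rather than merely a cofinite fragment.
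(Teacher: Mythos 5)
Your proposal is correct and follows essentially the same route as the paper: split on whether $M$ is $z$-limited, use Proposition~\ref{acceptupperbound} for the $\bigO(n)$ upper bound, Lemma~\ref{lem-csa-atleastlin} for the non-constant case, and a finite-control simulation of the constant-size stack to get regularity in case~2 (where you are in fact more explicit than the paper about making the stack bound global and about transferring $\bigO(n)$ to $\zlspace$). The only shortfall is in the parenthetical ``regular or not'' witnesses: your examples omit a non-regular $\mathrm{a}$-unlimited machine for case~1 and a regular machine for case~3, which the paper supplies by modifying any \CSA\ to first push arbitrarily many copies of a fresh symbol (making it unlimited for both measures while preserving the language) and by a machine that copies its input to the stack, respectively.
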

\begin{proof}
\MAR{
Is the ``can be either regular or non-regular'' distinction here actually useful?
Except for the seond case?
}
Any \CSA\ can be modified so that it first pushes an arbitrary number of a new symbol $x$ on the stack,
and then resumes the original computation, treating the symbol $x$ as the left end marker.
This modified \CSA\ accepts the same language, however it is not accept and strong limited.
Since \CSA s can accept both regular and non-regular languages,
either of the possibilities in case (1) can happen.

If $\zspace_M(n)$ is in $O(1)$, there is some constant $c$ such that any (accepting) computation of $M$
requires only $c$ work tape cells.
Therefore it is possible to simulate the work tape using
only a non-deterministic state control unit with $|Q| \cdot |\Gamma|^c$ states,
and thus $\lang(M)$ is regular.

If $\zspace_M(n)$ is $z$-limited, but not in $O(1)$, Lemma \ref{lem-csa-atleastlin} applies,
and we obtain the third case.
For a regular example, consider a \CSA\ which copies its input on the work tape and accepts.
For a non-regular example, consider a \CSA\ accepting the language $\setb{w \# w}{w \in \Sigma^*}$
by copying the first part of the input on the stack, entering read mode,
and comparing the rest of the input with the contents of the stack.
\end{proof}

We conclude this section by giving an example of a \NESA\ for which Theorem \ref{thm-csa-full} does not hold.
Therefore the requirement for the machine to be a checking automaton is necessary. 

\begin{proposition}
\label{lessthann}
There exists a $\NESA$ (and a $\SA$) $M$ which accepts a non-regular language such
that $\wspace(n)$, $\aspace(n)$, and $\sspace(n)$ are all in $\Theta(\sqrt{n})$.
\end{proposition}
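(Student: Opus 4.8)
The plan is to exhibit a single \emph{deterministic} \NESA\ $M$ (which is then automatically a \SA, since every \NESA\ is a \SA\ making no \ipop\ moves) recognizing a language whose natural ``counter'' has size $\Theta(\sqrt{n})$. Consider
$$L = \setb{a^{m_1}\# a^{m_2}\#\cdots\# a^{m_k}}{k\ge 1,\ m_1=1,\ m_{i+1}=m_i+1 \text{ for } 1\le i\le k-2,\ \text{and } m_k\le m_{k-1}+1},$$
i.e.\ the blocks of $a$'s must increase by exactly one, except that the \emph{last} block may be any length up to one more than its predecessor. The machine $M$ records the length of the current block as the stack height: having verified blocks $1,\dots,i$, the stack holds $a^{i}$. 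To check block $i+1$, it moves the stack head to the bottom and reads input $a$'s while scanning up the stack, matching each against a stored $a$; on reaching the top it pushes exactly one more $a$ (a write after a read, legal for a \NESA\ but not for a \CSA), so the stack becomes $a^{i+1}$, while any further $a$ before the next $\#$ triggers rejection. Reaching the end of the input at any point (even mid-block) leads to acceptance. Every move is forced by the state, input letter, and scanned stack symbol, so $M$ is deterministic and never pops.

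First I would establish the two upper bounds. A symbol is pushed only after a complete block has been consumed, and the block whose completion raises the stack to height $j$ has length at least $j$; hence if the stack ever reaches height $s$ then $M$ has already read at least $1+2+\cdots+s = s(s+1)/2$ letters, so $s\le\sqrt{2n}$ on any input of length $n$. Because $M$ is deterministic there is a unique computation on each input, so this bound governs every partial computation; therefore $\sspace_M(n)$, and a fortiori $\aspace_M(n)$ and $\wspace_M(n)$, is in $\bigO(\sqrt{n})$.

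The heart of the argument, and the main obstacle, is the matching bound $\Omega(\sqrt{n})$ \emph{for every sufficiently large} $n$; for the weak and accept measures this forces $L$ to contain an accepted word of \emph{every} length, since otherwise $\wspace_M(n)=\aspace_M(n)=0$ at the missing lengths. This is exactly what the permissive final block buys. Writing $w_s = a^{1}\# a^{2}\#\cdots\# a^{s}$, one computes $|w_{s+1}|-|w_s| = s+2$ and checks that the words $w_s\# a^{t}$ for $0\le t\le s+1$ all lie in $L$ and have lengths filling the whole interval from $|w_s|+1$ up to $|w_{s+1}|$; together with $|w_1|=1$ this covers every length $n\ge 1$. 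Each such word of length $n$ has $|w_s|=\Theta(s^2)$ and its unique accepting computation uses stack height $s$ or $s+1 = \Theta(\sqrt{n})$, which simultaneously lower-bounds all three measures by $c\sqrt{n}$; with the upper bound this yields $\Theta(\sqrt{n})$ throughout. Finally, $L$ is non-regular: pumping a long word $w_s$ inside its early, short blocks alters some block's length and so violates the exact-increment condition on a later (non-final) block, which the last-block relaxation cannot repair, so the pumped word leaves $L$. I would close by noting that the same $M$ witnesses the claim for \SA.
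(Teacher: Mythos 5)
Your proposal is correct and essentially the same as the paper's own proof: the paper takes $L_0 = \pref(\{a^1 b\, a^2 b \cdots a^r b \mid r \geq 0\})$, which is your language up to renaming $\#$ as $b$, and accepts it with the same deterministic block-matching \NESA\ whose stack reaches height $s$ only after at least $1+2+\cdots+s$ input letters have been read, yielding the identical $\Theta(\sqrt{n})$ bounds for all three measures. The extra details you supply (the explicit covering of every input length by the words $w_s \# a^t$, and the pumping argument for non-regularity) are steps the paper leaves implicit, not a different method.
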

\begin{proof}
Consider the language
$L = \setb{a^1 b a^2 b \cdots a^r b}{r \geq 0 }$,
and let $L_0 = \pref(L)$,
which is not regular.

The language $L_0$ can be accepted by a \NESA\ $M$ in the following way:
\begin{enumerate}
\item Push a symbol $a$ on the stack.
\item Move the head to the left end of the work tape.
\item While both the input and the work head scan symbols $a$, move them to the right synchronously.
\item If the end of the input is reached, accept.
\item If a symbol $b$ is read from the input tape and the head is at the right end of the work tape,
repeat from step (1). Otherwise reject the input.
\end{enumerate}

For every $n \in \NNZ$, there is exactly one word of length $n$ in $L_0$.
The length of the stack $s$ used in the only computation of $M$ on this word satisfies
$\sqrt{n/2} - 2 \leq \strlen{s} \leq 2 \sqrt{n} + 1$.
Further, no input string (whether or not in $\lang(M)$) uses more than $2\sqrt{n}+1$ work tape cells.
Thus all three complexity measures are in $\Theta(\sqrt{n})$.
\end{proof}

In conclusion, for \CSA, the accept and strong measures have been completely characterized, as either constant, linear, or $z$-unlimited. For the weak measure, each machine has $\wspace(n)$ in $\bigO(n)$. If $\wspace(n)$ is in $\bigO(1)$, then $\lang(M)$ is regular; otherwise there are machines with $\wspace(n) \in \Theta(n)$, but we also demonstrate a machine $M$ with $\wspace(n)$ in $\bigO(\sqrt{n})$ but not $\littleO(\sqrt{n})$. The other possible functions are unknown. For $\NESA$ and $\SA$, unlike in the case of \CSA, non-erasing stack automata with an accept or strong space complexity
of $\Theta(\sqrt{n})$ exist for all three measures.
The question of which functions can asymptotically describe this behavior, other than $\Theta(\sqrt{n})$, remains open.
These results are summarized in Table \ref{tab-results1}.
\begin{table}[th!]
\centering
\begin{tabular}{|l||ccc||ccc||}
\hline
                                           & \multicolumn{3}{c||}{$\CSA$}  & \multicolumn{3}{c||}{$\NESA$ and $\SA$}          \\ \hline
function                                   & weak         & accept       & strong       & weak         & accept       & strong \\\hline
$\bigO(1)$                                 & $\checkmark$ & $\checkmark$ & $\checkmark$ & $\checkmark$ & $\checkmark$ & $\checkmark$ \\
between                                    & ?            &              &              & ?            & ?            & ?       \\
$\bigO(\sqrt{n})$ not $\littleO(\sqrt{n})$ & $\checkmark$ &              &              & $\checkmark$ & $\checkmark$ & $\checkmark$ \\
between                                    & ?            &              &              & ?            & ?            & ?       \\
$\bigO(n)$ not $\littleO(n)$               & $\checkmark$ & $\checkmark$ & $\checkmark$ & $\checkmark$ & $\checkmark$ & $\checkmark$ \\
between                                    &              &              &              & ?            & ?            & ?           \\
unlimited                                  &              & $\checkmark$ & $\checkmark$ &              & $\checkmark$ & $\checkmark$ \\\hline
 \end{tabular}
\caption{For each machine model with $\CSA$ in columns 2 through 4, and both $\NESA$ and $\SA$ in columns 5 through 7 (the results are the same for both), for each of the space complexity measures, weak, accept, or strong, and for each possible function represented in the rows, a $\checkmark$ is placed if there exists a machine of that type with that measure that gives exactly that function, a blank space is left if there does not exist such a machine with that function, and a ? is given if it is open whether there exists such a machine.}
\label{tab-results1}
\end{table}

\section{Space Complexities of Languages Accepted by Stack Automata}

So far, we have been concerned with measuring the space complexity of a specific stack machine.
A natural question to ask is, given a language,
what is the minimum space complexity of a (non-erasing, checking) stack machine which accepts this language.
We define this as follows:

\begin{definition}
Let $L$ be a language and $f(n)$ be a function of natural numbers.

We say that $L$ belongs to the class $\SCSAw(f(n))$,
if there exists a \CSA\ accepting $L$ with weak space complexity in $\bigO(f(n))$,
and there is no \CSA\ accepting $L$ with weak space complexity in $\littleO(f(n))$.

The language classes $\SCSAa(f(n))$ and $\SCSAs(f(n))$ are defined analogously for the accept and strong complexity measures of automata.


We say that $L$ belongs to the class $\SCSAa(\infty)$ (resp. $\SCSAs(\infty)$),
if every \CSA\ accepting $L$ is ${\rm a}$-unlimited (resp. ${\rm s}$-unlimited).

Similar classes of \NESA\ and \SA\ languages are defined analogously.
\end{definition}

As an example, consider the language $\Lww = \setb{w \# w}{w \in \{ a, b \}^*}$.
The naive \CSA\ accepting $\Lww$ by copying the first instance of the string $w$ to the work tape
and comparing it to the second instance has an accept space complexity in $\bigO(n)$.
Since $\Lww$ is not regular, according to Theorem \ref{thm-csa-full},
any \CSA\ accepting $\Lww$ is either a-unlimited,
or its accept space complexity is not in $\littleO(n)$.
Therefore we can say that $\Lww$ is in the class $\SCSAa(n)$.

Proposition \ref{prop-csa-linear} shows that for any $L \in \langfam(\CSA)$,
there is a checking stack automaton $M$ which accepts $L$ with $\wspace_M(n)$ in $O(n)$.
A similar result also holds for languages of all stack automata \cite{StackAutomataCS,KingWrathall}.
Therefore there are no languages belonging to the classes $\SCSAw(f(n))$ and $\SSAw(f(n))$
for any function $f(n)$ which is in $\omega(n)$.
Informally speaking, the weak space complexity of any (checking) stack automaton language is at most linear.

Next we shall consider the accept and strong classes of languages.
We show that there exist \CSA\ languages in $\SCSAs(\infty)$,
i.e. languages for which there is no accepting \CSA\ that is s-limited.

\begin{example}
\label{examplenotstrong}
Consider the language
$$\Lcopy=\setb{ u \$ u \# v \$ v }{ u,v \in \{a,b\}^* }.$$
There is a \CSA\ that accepts $\Lcopy$ as follows:
the automaton first guesses two strings $u'$ and $v'$ on $\lambda$-transitions,
pushes $u' \# v'$ on the stack,
and finally it verifies that the guessed strings match both occurrences of strings $u$ and $v$ on the input tape.

Using the accept measure, it is easy to see that $\aspace(n)$ for this machine is in $\bigO(n)$:
the only accepting computation is the one in which the machine guesses $u$ and $v$ correctly.
However, if we consider the strong measure,
the automaton can guess arbitrarily long strings $u'$ and $v'$, and then reject.
Even if the computation is not accepting, it is still counted towards the strong measure.
Therefore this machine is ${\rm s}$-unlimited.
\end{example}

We can ask whether \emph{every} machine that accepts $\Lcopy$ is s-unlimited.
We will see that this is in fact the case.
To prove this, we shall use the following lemma. 

\begin{lemma}
\label{lem-lwm}
Let $M = (Q, \Sigma, \Gamma, \delta, q_0, F)$ be a \CSA.
Consider the language consisting of prefixes of input words which $M$ reads during the entire write phase in some accepting computation:
$$L_{w,M} = \setb{u}{
(q_0, uv, \eml \rw \emr) \vdash_w^*
(q, v, \gamma) \vdash_r^*
(q_f, \lambda, \gamma'),
q \in Q_r,
q_f\in F
}.$$
The language $L_{w,M}$ is a regular language.
\end{lemma}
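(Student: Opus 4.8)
The plan is to observe that whether $u$ belongs to $L_{w,M}$ is determined by only a bounded amount of information carried through the write phase, and that the ``completability'' of the configuration reached at the end of the write phase can be tested using the regularity of the store language, so that the inherently two-way behaviour of the read phase never has to be simulated.

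First I would reformulate membership in $L_{w,M}$ as a reachability condition on the store language. Since $M$ is a \CSA, an accepting computation is a single write phase followed by a single read phase, and during a write phase the head always scans the topmost stack symbol. Hence in the definition of $L_{w,M}$ the configuration $(q,v,\gamma)$ with $q\in Q_r$ is the first read-state configuration and necessarily has the form $\gamma=\eml\alpha\rw\emr$ with $\alpha\in\Gamma_0^*$ (head at the top). Call a store word of the form $q\eml\alpha\rw\emr$ with $q\in Q_r$ and $\alpha\in\Gamma_0^*$ a \emph{boundary word}, and set
$$B=\stlang(M)\cap\bigl(Q_r\,\eml\,\Gamma_0^*\,\rw\,\emr\bigr).$$
I claim that $u\in L_{w,M}$ if and only if the write phase of $M$ on input $u$ can reach a configuration whose store word lies in $B$. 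For the forward direction, if $u\in L_{w,M}$ then the write phase on $u$ reaches a boundary configuration $c$ that is followed by an accepting read phase, so $c$ occurs in an accepting computation and its store word is in $\stlang(M)$, hence in $B$. For the converse, if the write phase on $u$ reaches a boundary configuration $c$ whose store word is in $B$, then membership in $\stlang(M)$ means $c$ occurs in some accepting computation; because the stack is frozen throughout the read phase of a \CSA, the read-phase suffix of that computation is a valid completion of $c$ for the input it consumes, and appending it to the write phase on $u$ produces an accepting computation witnessing $u\in L_{w,M}$.

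Next I would assemble the two regular ingredients. The set $B$ is regular, being the intersection of the store language $\stlang(M)$ --- regular by the cited result --- with a manifestly regular set. Moreover the write phase defines a finite transducer: because $M$ is non-erasing and a write transition is applicable only with the head on the top symbol, the sole data needed to determine the next write transition are the current control state and the current top symbol (ranging over the finite set $\Gamma_0\cup\{\eml\}$). Thus a nondeterministic finite transducer can guess the target read state $q^\ast\in Q_r$, read $u$ on its input, emit exactly the pushed symbols (building $\alpha$) flanked by $q^\ast\eml$ and $\rw\emr$, and verify upon entering a read state that it equals $q^\ast$; input steps read a letter or $\lambda$ and output steps emit a pushed symbol or $\lambda$ for a \istay\ transition, all admissible. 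This realises the relation $T=\{(u,\,q\eml\alpha\rw\emr):\text{the write phase on }u\text{ reaches boundary word }q\eml\alpha\rw\emr\}$ as a rational transduction.

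Finally, by the reformulation $L_{w,M}=T^{-1}(B)=\{u:(u,w)\in T\text{ for some }w\in B\}$, and since regular languages are closed under preimages of rational transductions, $L_{w,M}$ is regular. Concretely one may instead build a \DFA\ for $B$ together with a single \NFA\ that runs the write-phase simulation while feeding each emitted stack symbol to this \DFA; this is sound because the write phase produces $\alpha$ from bottom to top, matching the left-to-right order in which the \DFA\ scans a boundary word, and the \NFA\ accepts precisely when a read state is entered with the \DFA\ in an accepting state. The main obstacle is the first step: recognising that completability of the end-of-write configuration is detected ``for free'' by the store language, which hinges on the \CSA-specific facts that phases do not interleave and that the stack is frozen during reading; once this is in place, the transducer construction and the closure step are routine.
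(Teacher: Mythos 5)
Your proposal is correct and takes essentially the same route as the paper: both arguments rest on the regularity of the store language $\stlang(M)$ together with the observation that the write phase of a \CSA\ is finite-state (the head always scans the top), and both reduce membership in $L_{w,M}$ to the write phase reaching a boundary store word $q\eml\alpha\rw\emr$ that lies in $\stlang(M)$, with the same completability argument for the converse direction. Your rational-transduction packaging $L_{w,M}=T^{-1}(B)$ is just an abstract rephrasing of the paper's explicit product \NFA\ (states in $Q\times Q\times\Gamma\times Q_S$) simulating the write phase while feeding pushed symbols to a store-language automaton, which is exactly the concrete construction you sketch at the end.
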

\begin{proof}
We shall construct an \NFA\ $N$ accepting this language.
The basic idea of the construction is to simulate two machines in parallel:
first, the machine $M$ during its write phase,
and second, an \NFA\ accepting the store language of $M$.
Denote the latter by $S = (Q_S, Q \cup \Gamma \cup \{\rw\}, \delta_S, s_0, F_S)$.
Since in its write phase the machine $M$ can not examine the contents of its stack beyond the top symbol,
it can be simulated by an \NFA.
Symbols pushed on the stack are then read by the \NFA\ $S$,
which accepts at the end of the simulated write phase if the content of the stack belongs to the store language,
and thus it occurs during some accepting computation of $M$.

The state set of $N$ is $(Q \times Q \times \Gamma \times Q_S)$, and an additional initial state $n_0$.
Recall that the store language of $M$ contains strings of the form $q \gamma$,
where $q$ is a state of $M$ and $\gamma$ is the content of the stack including symbols $\eml$, $\emr$, and $\rw$.
Since the first symbol of $q \gamma$ is the state, it is read and stored
in the first component of the states of $N$ (which does not change throughout the computation).
The second component is a state of $M$ that simulates the write phase of $M$,
with the third component being the simulated top of the stack,
and the fourth component is the state of the simulated store language automaton.


The detailed construction of this \NFA\ $N = (Q_N, \Sigma, \delta_N, n_0, F_N)$ is as follows:
\begin{align*}
Q_N &= ( Q \times Q \times \Gamma \times Q_S ) \cup \{ n_0 \}
\end{align*}
\begin{align*}
n_0 &\goestox{ \eml} (p, q_0, \eml, s)
&&\text{if $s_0 \goestox{p \eml} s$ in $\delta_S$, where $p \in Q$,} \\
(p, q_1, x, s) &\goestox{a} (p, q_2, x, s)
&&\text{if $(q_1, a, x) \goesto (q_2, \istay)$ in $\delta$,} \\
& &&\text{where } p, q_1, q_2 \in Q; s \in Q_S; x \in \Gamma; a \in \Sigma \cup \{ \lambda \}, \\
(p, q_1, x, s_1) &\goestox{a} (p, q_2, y, s_2)
&&\text{if $(q_1, a, x) \goesto (q_2, \ipush{y})$ in $\delta$ and $s_1 \goestox{y} s_2$ in $\delta_S$,} \\
& &&\text{where } p, q_1, q_2 \in Q; s_1, s_2 \in Q_S; x, y \in \Gamma; a \in \Sigma \cup \{ \lambda \}.
\end{align*}
%
The final states of $N$ are states of the form $(p, p, x, s)$,
where $p$ is a read state of $M$,
$x \in \Gamma$,
$s \in Q_S$, and
$s \goestox{\rw \emr} s_f$ in $\delta_S$ for some final state $s_f$ of $S$.

This machine $N$ accepts a string $u$ in a state $(p, p, x, s)$
if $(q_0, u, \eml \rw \emr) \movews (p, \lambda, \gamma)$,
and $p \gamma \in \stlang(M)$,
and $p$ is a read state of $M$.
If the string $p \gamma$ is in the store language of $M$,
there must be some computation of $M$ of the form
$$
(q_0, u'v, \eml \rw \emr) \moves (p, v, \gamma) \movers (q_f, \lambda, \gamma') \qquad \text{with $q_f \in F$.}
$$
The second part of the computation must consist of only read transitions,
since $p$ is a read state and $M$ is a \CSA.
Then chaining the computations together we get
$$
(q_0, uv, \eml \rw \emr) \movews (p, v, \gamma) \movers (q_f, \lambda, \gamma')
$$
and therefore $u \in L_{w, M}$.
\end{proof}

We can now prove the following result:
\begin{proposition}
\label{candidatestrong}
Every \CSA\ $M$ accepting the language $\Lcopy$ is ${\rm s}$-unlimited.
\end{proposition}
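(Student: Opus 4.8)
The plan is to argue directly that on a single fixed input $M$ already exhibits partial computations with unbounded stack, so that $\sspace_M$ takes the value $\infty$ there. The structural facts I will lean on are that, because $M$ is a \CSA, every accepting computation splits into one write phase followed by one read phase; during the write phase $M$ behaves like an \NFA\ that can only push and inspect the top symbol, so it \emph{cannot} compare two substrings of the input; whereas during the read phase the stack is frozen and $M$ operates as a two-way finite automaton over this fixed stack content, equipped with an auxiliary one-way input head. I will also invoke Lemma \ref{lem-lwm}: the set $L_{w,M}$ of input prefixes consumed in write phases of accepting computations is regular, which lets me treat the write prefix as coming from a regular language.

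The first key step is a lower bound on where equality tests can occur. I claim that in every accepting computation on a word $u\$u\#v\$v$, at least one full copy of $u$ and at least one full copy of $v$ is read during the read phase. The reason is that a two-way finite automaton accepts only regular languages, so it cannot test whether two unbounded contiguous regions of the \emph{frozen} stack are equal; hence if both copies of $u$ (or both copies of $v$) were already pushed during the write phase, the read phase could not certify their equality, and $M$ would accept words outside $\Lcopy$ once $u$ (resp.\ $v$) is long enough. The delicate point here is a copy that straddles the write/read boundary; I will argue that a copy only partially pushed during the write phase is useless for the frozen-stack comparison, so a \emph{full} copy must be consumed in the read phase.

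From this I deduce the geometry of the split. Since the first copy of $u$ precedes the second, any write prefix $P$ that consumes a symbol of the second $u$ already contains all of the first $u$; if $P$ does not also contain all of the second $u$, then no full copy of $u$ survives into the read phase, contradicting the previous step, while if $P$ contains both copies we again reach the impossible stack-internal comparison. Hence $P$ reads no symbol of the second $u$, i.e.\ $P$ is a prefix of $u\$$ (consistent with $L_{w,M}$ being free of $\#$ on its $\Lcopy$-relevant part). Consequently the write phase reads nothing of $v$, so the stack content after the write phase is independent of $v$; yet to certify $v=v$ the read phase must compare the two copies of $v$, both of which now lie in the read-phase input, through the frozen stack. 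This forces the stack to encode a full copy of $v$, so its length is at least $c\,\strlen{v}$ for a fixed constant $c>0$, and this copy of $v$ is \emph{guessed}, not read.

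Finally I convert this into s-unlimitedness. For every $v$ the accepting computation of $u\$u\#v\$v$ contains a write phase that reads some prefix of $u\$$ and guesses a stack of size at least $c\,\strlen{v}$. Fixing $u$, there are only finitely many prefixes of $u\$$, so by pigeonhole a single prefix $P^{\ast}$ serves infinitely many $v$, with the guessed stack growing without bound. On the fixed input $w_0=u\$u\#\$$ (taking the second block $v_0=\lambda$), each such write phase, reading $P^{\ast}\subseteq u\$$ and then guessing, is a legitimate partial computation of $M$ on $w_0$; stopping at the end of the write phase yields partial computations on $w_0$ of arbitrarily large stack size. Hence $\sspace_M(w_0)=\infty$ and $M$ is s-unlimited. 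I expect the main obstacle to be making the first step fully rigorous: formalizing, via the fact that two-way finite automata accept only regular languages together with a fooling-set or Myhill-Nerode argument on the read-phase behavior, that the frozen stack can support only one unbounded input-driven equality test, and carefully dispatching the boundary cases in which a copy is split across the write/read transition.
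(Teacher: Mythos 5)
Your proof hinges entirely on Claim 1 (``in every accepting computation on $u\$u\#v\$v$, a full copy of $u$ and a full copy of $v$ is read during the read phase''), and this is exactly where the argument breaks. First, the read phase is \emph{not} a pure two-way automaton over the frozen stack: it also consumes the remaining one-way input, and for a fixed remaining input $z$ its behaviour on stacks is that of a \tNFA\ with roughly $|Q|\cdot(|z|+1)$ states (fold the input position into the state). So ``two-way automata accept only regular languages'' only forbids a frozen-stack equality test on two regions of length $|u|$ when $|u|$ is large relative to (an exponential in) the length of the \emph{remaining} input. But your pigeonhole step forces the opposite regime: you fix $u$ and let $v\to\infty$, so the remaining input has length about $2|v|$ and the impossibility argument evaporates; you cannot exclude accepting computations whose write phase crosses $\#$, nor ones that run through $v_1$ (in which case the large stack is paid for by input actually read, and yields no ${\rm s}$-unlimitedness). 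Second, the universal claim is outright false for some correct machines: a \CSA\ can have one nondeterministic branch that, for $|u|\leq 5$ say, checks $u=u$ entirely in its finite control during the write phase, then pushes $v_1$ \emph{while reading it} and compares it against $v_2$ in the read phase (together with a guess-everything branch covering all other words, this accepts exactly $\Lcopy$). On short-$u$ inputs this machine has accepting computations in which no copy of $u$ is read in the read phase and the write prefix is not contained in $u\$$, so your step 2 fails. Note also that although you cite Lemma \ref{lem-lwm}, your argument never actually invokes it; it is precisely the missing tool.

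For contrast, the paper avoids all universal claims about where the write phase ends. It takes $L'\subseteq\Lcopy$ with $|v|$ large compared to $|u|$ and splits: if \emph{all} accepting computations on \emph{all} words of $L'$ cross $\#$ in the write phase, then $\pref(L_{w,M})\cap\{a,b\}^*\$\{a,b\}^*\#$ would equal the non-regular set $\setb{u\$u\#}{u\in\{a,b\}^*}$, contradicting Lemma \ref{lem-lwm}; otherwise there is \emph{one} computation whose write phase ends before $\#$, and for it either the stack is small (at most $\kappa\approx(2|u|+1)|Q||\Gamma|$), in which case two identical configurations occur while reading the second $v$ and cutting between them accepts a word outside $\Lcopy$, or the stack exceeds $\kappa$ while only $2|u|+1$ input symbols were read, which forces a $\lambda$-pushing loop whose repetition gives unbounded partial computations on that very input. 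Parts of your proposal are salvageable: your configuration-counting argument that the stack must have size $\Omega(|v|/|Q|)$ when both copies of $v$ lie in the read-phase input is correct (and could replace the paper's small-stack subcase), and your observation that write phases sharing a common input prefix are partial computations on one fixed word is sound. But without Lemma \ref{lem-lwm} to dispose of write phases crossing $\#$, and with the $\lambda$-loop argument replaced by a pigeonhole that only works in a regime where your Claim 1 is unsupported, the proof as structured does not go through.
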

\begin{proof}
Let $M = (Q, \{ a, b, \$, \# \}, \Gamma, \delta, q_0, F)$
be a \CSA\ accepting $\Lcopy$.
Consider the following subset of $\Lcopy$:
$$L' = \setb{ u \$ u \# v \$ v }{ u, v \in \{ a, b \}^*; \strlen{v} > ((2 \strlen{u} + 1) |Q| |\Gamma| + 1) |Q| }$$
First assume that every accepting computation of $M$ on every string in $L'$ enters the first read state
only after reading the symbol $\#$.
By Lemma \ref{lem-lwm}, the language $L_{w,M}$ is regular,
and thus so is $\pref(L_{w,M}) \cap ( \{ a, b \}^* \$ \{ a, b \}^* \# )$.
However, the latter language is by our assumption equal to $\setb{u \$ u \#}{u \in \{a, b\}^*}$, which is not regular;
this is a contradiction.

Therefore let $w = u \$ u \# v \$ v$ be a string in $L'$ on which $M$ has an accepting computation
which enters the first read state before $M$ reads the symbol $\#$.
Since $M$ is a \CSA, the contents of the work tape are fixed by the time $M$ reads the substrings $v$.
Let $\kappa = (2 \strlen{u} + 1) |Q| |\Gamma|$,
and compare the total number of work tape cells used by $M$ in this computation to $\kappa$.

Assume that the total number of work tape cells used by $M$ in this computation is at most $\kappa$.
Thus the machine $M$ has in its read phase at most $(\kappa + 1) |Q|$ distinct configurations,
considering the position of the tape head and the current state.
By our choice of $L'$, at least two configurations reached while reading symbols from the second occurrence of the string $v$ must be exactly identical.
This however means that we can omit the sequence of configurations between these two,
and obtain an accepting computation of $M$ on the string $u \$ u \# v \$ v'$, where $\strlen{v'} < \strlen{v}$.
This is a contradiction with the fact that $\lang(M) = \Lcopy$.

Finally, assume that this computation on the string $w$ uses more than $\kappa$ tape cells.
All of these symbols are pushed on the stack before $M$ reads the symbol $\#$.
This means that there must be some sequence of at least $|Q| |\Gamma|$ subsequent configurations in this computation
during which $M$ does not read any input, but pushes at least one symbol on the stack.
This computation therefore contains a loop in which $M$ returns to the same state with the same symbol on the top of the stack using only $\lambda$-transitions, and pushes at least one symbol on the stack.
By repeating this loop we can obtain a partial (not necessarily accepting) computation of $M$ on the same input $w$
which writes an arbitrary number of symbols on the work tape.
Therefore, $M$ is s-unlimited.
\end{proof}

The existence of \CSA\ languages in $\SCSAa(\infty)$ is still open, as discussed next.
\begin{conjecture}
Consider the language 
$$\Lkd =
\{ 1^k \# v_1 \# \cdots \# v_m  \mid
 v_i \in \{ a, b\}^*, | \{v_i \mid 1 \leq i \leq m\} | \leq k \}.$$
A string in this language begins by giving a number $k$ in unary notation,
and then some number of strings over $\{ a, b \}$ separated by markers.
A string belongs to the language if among the $m$ strings $v_i$ there appear at most $k$ distinct ones.
For example, the string $1 1 \# a \# ab \# a$ is in the language, while $1 1 \# a \# ab \# ba$ is not.

We conjecture that every checking stack automaton accepting $\Lkd$ is ${\rm a}$-unlimited.
\end{conjecture}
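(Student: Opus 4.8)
The plan is to reduce the conjecture to a pure space lower bound by invoking the trichotomy of Theorem~\ref{thm-csa-full}. First I would observe that $\Lkd$ is not regular: already the sublanguage obtained by fixing $k=1$ and intersecting with the regular set $1\#\{a,b\}^*\#\{a,b\}^*$ is $\setb{1\# v \# v}{v\in\{a,b\}^*}$, which is non-regular. Hence case~(2) of Theorem~\ref{thm-csa-full} cannot occur for any \CSA\ accepting $\Lkd$, and it suffices to rule out case~(3): I must show that no \CSA\ accepting $\Lkd$ is ${\rm a}$-limited. By Proposition~\ref{acceptupperbound}, a-limitedness would yield a constant $c$ with $\aspace_M(n)\le cn$, so every accepting computation on an input of length $n$ would use at most $cn$ stack cells; the whole argument then aims at contradicting this bound.

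So assume $M$ is such an a-limited \CSA. Since $M$ is checking, each accepting computation consists of a single write phase, which reads a prefix of the input and fixes a stack word $\gamma$, followed by a single read phase that scans $\gamma$ two-way while consuming the remaining input one-way. Following the method of Proposition~\ref{candidatestrong}, I would first apply Lemma~\ref{lem-lwm}: if on the hard instances the write phase always continued past the $\#$ ending the $1^k$-block and recorded unboundedly much of the list of $v_i$'s, then $L_{w,M}$, intersected with a suitable regular set, would be forced to equal a non-regular language, a contradiction. This pins the phase boundary and forces the verification of the ``at most $k$ distinct'' condition into the read phase, where it must be carried out against the fixed stack $\gamma$, whose size is at most $cn$.

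The core step is then a case analysis on a carefully chosen family of accepted inputs, where the essential phenomenon is \emph{dictionary padding}. Intuitively, to accept $\Lkd$ the machine must guess on the stack a ``dictionary'' of the distinct blocks and match each $v_i$ against it using the one-way input; correctness forces $|\gamma|$ to carry all the distinct-block information, but nothing forces the guessed dictionary to be \emph{tight}. I would therefore concentrate on inputs with strictly fewer than $k$ distinct blocks, so that a ``free slot'' is available, and argue one of two things. Either two read-phase configurations (state together with stack-head position, of which there are at most $|Q|(|\gamma|+O(1))$) repeat while the head scans some block, in which case that block can be shrunk, yielding an accepting computation on a word with a different distinct-count that is no longer in $\Lkd$, contradicting $\lang(M)=\Lkd$; or the write phase contains a $\lambda$-loop that returns to the same state with the same top symbol while pushing at least one symbol, feeding the ``free'' region of $\gamma$. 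Pumping that loop produces, on the \emph{same} accepted input, accepting computations with arbitrarily large stacks, so $M$ is ${\rm a}$-unlimited, contradicting a-limitedness.

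The main obstacle is exactly this last implication, and it is what separates the accept measure from Proposition~\ref{candidatestrong}: there the conclusion was ${\rm s}$-unlimited, so a pushing loop could be pumped freely because \emph{any} partial computation counts, whereas here the pumped computation must remain \emph{accepting}. Enlarging $\gamma$ by pumping the write phase changes the fixed tape that the read phase scans, and acceptance depends on $\gamma$ globally; I must show that the pumped symbols can be inserted into a region of $\gamma$ that the read phase either never visits or traverses ``transparently'' (as an unused dictionary entry), so that acceptance is preserved. Guaranteeing such an acceptance-preserving padded region for an \emph{arbitrary} machine, rather than for the specific dictionary-guessing strategy, is the crux of the difficulty, and it is the reason the statement is presented here only as a conjecture.
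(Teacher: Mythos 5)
First, be aware of the situation: the statement you attempted is presented in the paper as a \emph{conjecture}. The paper offers no proof of it; it only observes that the natural dictionary-guessing \CSA\ for $\Lkd$ is ${\rm a}$-unlimited (on the input $11\#a\#a$ the unused dictionary slot can be padded arbitrarily while the computation still accepts), and it explicitly leaves open whether \emph{every} \CSA\ accepting $\Lkd$ must be ${\rm a}$-unlimited. Your proposal does not close this gap either, and you say so yourself: your final paragraph concedes that you cannot guarantee, for an arbitrary machine, an acceptance-preserving region of $\gamma$ into which pumped symbols can be inserted. That concession is exactly the crux, and your diagnosis of it is accurate and matches the paper's own framing: in Proposition~\ref{candidatestrong} a pushing $\lambda$-loop can be pumped freely because the \emph{strong} measure counts any partial computation, whereas for the \emph{accept} measure the pumped computation must still accept, and enlarging the stack changes the very tape the read phase scans. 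So the sound parts of your proposal are the preliminary reductions ($\Lkd$ is indeed non-regular, by your intersection with $1\#\{a,b\}^*\#\{a,b\}^*$, so by Theorem~\ref{thm-csa-full} only the ${\rm a}$-limited linear case needs to be excluded), but the heart of the conjecture remains unproven in your write-up, just as it is in the paper.

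Beyond the acknowledged gap, one concrete step of your sketch is internally flawed. You restrict attention to inputs with \emph{strictly fewer} than $k$ distinct blocks (to have a free slot), and in the first branch of your case analysis you shrink a block after two read-phase configurations repeat, claiming the resulting word has ``a different distinct-count that is no longer in $\Lkd$''. But if the original word has fewer than $k$ distinct blocks, replacing one occurrence of a block by a shorter string raises the distinct count by at most one, so the new word still has at most $k$ distinct blocks and remains in $\Lkd$; no contradiction follows. This is not incidental: the slack that creates the padding slot you need in the second branch is precisely what defeats the shrinking argument in the first branch (this robustness of $\Lkd$ under block replacement is what makes the technique of Proposition~\ref{candidatestrong} inapplicable). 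Your appeal to Lemma~\ref{lem-lwm} to pin the write/read boundary is also left vague --- it is not spelled out which non-regular language $L_{w,M}$ would be forced to equal --- though a version of that step can plausibly be made to work on inputs of the form $1\#v\#v$. In short: correct setup, correct identification of the obstruction, but no proof; the statement stands as open after your attempt exactly as it stands in the paper.
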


The language $\Lkd$ can be accepted by a \CSA\ which,
for every $1$ read, non-deterministically guesses a word $u_i$ on the stack, separated by markers $\$$.
After reading the first symbol $\#$, the work tape therefore contains $\$ u_1 \$ \cdots \$ u_k \$$.
The machine switches to read mode.
Upon reading each $\#$,
the machine now non-deterministically moves to one of the markers $\$$ on the stack,
and compares the input string $v_i$ to the string $u_j$ on the work tape.
If they match, the computation continues; if they disagree, the machine halts and rejects.

If the input contains at most $k$ distinct strings $v_i$,
it is indeed possible for the machine to guess these strings
(guessing the same string multiple times if there are fewer than $k$ distinct ones),
and then for every input string $v_i$ nondeterministically find the corresponding string $u_j$ on the stack and match them.
But if the input contains more than $k$ distinct substrings $v_i$,
then some of them must remain unmatched to the $k$ guessed strings,
and the machine rejects in every computation.

This machine is, however, a-unlimited.
Consider the input $11 \# a \# a$.
This string is in $\Lkd$.
Since the automaton is allowed to guess two strings,
as long as one of the guessed strings is $a$,
the other can be arbitrarily long and the input can still be matched.
Therefore accepting computations on this string exist using arbitrarily large amount of space.
We conjecture that this holds for any $\CSA$ that recognizes the language $\Lkd$.

Results from this section are summarized in Table \ref{tab-results2},
which shows the known non-empty classes of languages.
For all of $\CSA, \NESA, \SA$, those languages that can be accepted with a machine with space complexity in $O(1)$ are exactly the regular languages.
By Theorem~\ref{thm-csa-full},
every \CSA\ machine has an accept space complexity of either
$O(1)$, $\bigO(n)$ but not $\littleO(n)$, or is a-unlimited.
Therefore every \CSA\ language belongs to either $\SCSAa(1)$, $\SCSAa(n)$, or $\SCSAa(\infty)$.
The same holds for the strong measure.
%
We have proven in Proposition \ref{candidatestrong} that \CSA\ languages in $\SCSAs(\infty)$ exist;
this question is still open for the class $\SCSAa(\infty)$.
For the weak measure, the only known result for $\SA$ is that for every stack automata language, there is some machine that accepts it
in at most linear space \cite{StackAutomataCS}, and so there is no language that requires more than linear space; this is also true for $\CSA$ by Proposition \ref{prop-csa-linear}. This is open for non-erasing stack automata as the construction in \cite{StackAutomataCS} introduces transitions that pop.

Finally, it is an interesting open question to ask which are all the possible functions,
aside from constant and linear, that define a non-empty class of \CSA\ (\NESA, \SA) languages.
In Propositions \ref{sqrt} and \ref{lessthann} we have seen automata which operate with a space complexity in $\bigO(\sqrt{n})$.
However, the exact classes which these languages belong to are still unknown,
as we have not proven that there is no machine accepting that language with a space complexity in $\littleO(\sqrt{n})$.

\begin{table}[th!]
\centering
\begin{tabular}{|l||ccc||ccc||}
\hline
                                           & \multicolumn{3}{c||}{$\CSA$ language}  & \multicolumn{3}{c||}{$\NESA$ and $\SA$ language}          \\ \hline
class                                   & weak         & accept       & strong       & weak         & accept       & strong \\\hline
$\SMz(1)$                                 & $\checkmark$ & $\checkmark$ & $\checkmark$ & $\checkmark$ & $\checkmark$ & $\checkmark$ \\
between                                    & ?            &              &              & ?            & ?            & ?       \\
$\SMz(n)$               & ?            & $\checkmark$ & $\checkmark$ & ?            & ?            & ?        \\
between                                    &              &              &              &  *           & ?            & ?           \\
$\SMz(\infty)$                                  &              & ?            & $\checkmark$ &              & ?            & ?      \\\hline
 \end{tabular}
\caption{For each machine model $M$, with $\CSA$ in columns 2 through 4, and both $\NESA$ and $\SA$ in columns 5 through 7, for each of the space complexity measures $z$, weak, accept, or strong, and for each possible function represented in the rows, a $\checkmark$ is placed if there exists a language in the corresponding class, a blank space is left if there is no language in that class, and a ? is placed if it is open whether a language in that class exists. The entry marked * means that there does not exist such a language for stack automata, but it is open for non-erasing stack automata.}
\label{tab-results2}
\end{table}

\section{Decidability Properties Regarding Space Complexity of Stack Machines}

It is an easy observation that
when the space used by a checking stack automaton
is constant,
the device is no more powerful than a finite automaton.
Nevertheless,
given a checking stack automaton~$M$,
it is not possible to decide whether or not it accepts by using a constant amount of space with the weak measure.
This result can be derived by adapting
the argument used in~\cite{PP19}
for proving that,
when the weak measure is considered,
it is not decidable whether or not a nondeterministic pushdown automaton accepts by using a constant amount of pushdown store.
In that case,
the authors used a technique introduced in~\cite{Har67},
based on suitable encodings of single-tape Turing machine computations
and reducing the proof of the decidability to the halting problem; this can be done here as well.

\NOTE{Copied from appendix START}

\begin{proposition}
	It is undecidable whether a \CSA\ $M$ accepts in space~$\wspace(n)\in O(1)$ or not.
\end{proposition}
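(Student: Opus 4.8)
The plan is to prove undecidability by reduction from the halting problem for deterministic single-tape Turing machines on empty input, following the computation-history encoding of \cite{Har67} as adapted to the weak measure in \cite{PP19}. Given such a machine $T$, I would effectively construct a \CSA\ $M_T$ with the property that $\wspace_{M_T}(n) \in O(1)$ if and only if $T$ halts. Since halting is undecidable and the construction is uniform in $T$, this transfers undecidability to the constant-weak-space property.

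For the construction, encode a computation of $T$ as a string $C_0 \# C_1 \# \cdots$ listing its successive configurations, with alternate configurations reversed so that two consecutive ones can be compared by a two-way head scanning a stored copy. Let $M_T$ accept exactly those strings that are \emph{not} valid prefixes of the (unique, possibly infinite) computation of $T$. On input, $M_T$ nondeterministically guesses the kind of defect: a malformed configuration, an incorrect initial configuration $C_0$, content appearing after a halting configuration, or a transition error, i.e.\ a position $i$ at which $C_{i+1}$ is not the successor of $C_i$ under $T$. The first three defects are local and are checked by the finite control with only a bounded stack. For a transition error, $M_T$ reads input without pushing until the guessed position, copies the two relevant configurations onto the stack during its write phase, and then in its read phase uses the two-way stack head to locate a symbol position witnessing a violation of the successor relation; the remaining input is consumed with \istay\ moves. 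The stack height of such a computation is proportional to $|C_i| + |C_{i+1}|$. This block-comparison capability is exactly the power that lets a \CSA\ recognize languages such as $\Lww$, so the simulation stays within the model.

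The space analysis is the heart of the argument and is where I expect the main difficulty. If $T$ halts after $t$ steps, its computation $h$ is finite and every configuration occurring in $h$ has length at most some constant $\ell$. I would argue that \emph{every} string accepted by $M_T$ admits an accepting computation of bounded stack height: any invalid string either deviates from $h$ at its first point of disagreement among configurations of length at most $\ell$ (so the witnessing comparison costs at most $\ell + O(1)$ cells, even if the claimed successor is spuriously long), or it extends $h$ past its halting configuration and is caught by a local check. Hence $\wspace_{M_T}(n) \in O(1)$. Conversely, if $T$ never halts, its configurations grow without bound; for each $i$ I would take the genuine valid prefix $C_0 \# \cdots \# C_i$ and append a single incorrect successor $C'_{i+1}$, so that the \emph{only} source of invalidity is the deep transition at position $i$. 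Every accepting computation of $M_T$ on this input must then copy $C_i$ to the stack, forcing height $\Omega(i)$, and since $i$ is unbounded we get $\wspace_{M_T}(n) \notin O(1)$. The delicate step, specific to the weak (minimum-over-accepting-computations) measure, is to guarantee that these witnesses admit \emph{no} cheaper accepting computation; this is arranged by placing the mismatch in $C'_{i+1}$ so that it can be detected only after scanning essentially the entire configuration, which is the standard incompressibility core of the Hartmanis-style argument.
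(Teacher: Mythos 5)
Your overall strategy is the same as the paper's: reduce from the halting problem via Hartmanis-style encodings of computation histories, build a \CSA\ accepting the complement of the set of valid computation prefixes, let the ``local'' defects (malformed configurations, wrong initial configuration) be caught in the finite control with no stack use, and make the transition-error check the only branch that copies a configuration onto the stack, so that the weak measure on a string whose sole defect is a transition error at position $i$ is $\Theta(|C_i|)$. That skeleton is sound. In fact your closing ``incompressibility'' worry is unnecessary: the proposition concerns the weak space of the \emph{constructed} machine, not of every machine accepting the language, and since your transition-error branch always copies the entire configuration block before comparing, no cheaper accepting computation exists on a witness whose only defect is the deep transition error, regardless of where inside $C'_{i+1}$ the mismatch sits.

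However, there is a genuine gap in the non-halting direction: you assert that ``if $T$ never halts, its configurations grow without bound.'' This is false in general: a deterministic Turing machine can fail to halt by cycling forever within a bounded portion of its tape. For such a $T$, every configuration of the genuine computation has length at most some constant $\ell$, so every string accepted by $M_T$ has a defect witnessed by pushing a configuration of length at most $\ell$ (exactly as in your halting-case analysis), and hence $\wspace_{M_T}(n) \in \bigO(1)$ even though $T$ does not halt. Your reduction as stated therefore does not decide halting. The fix is precisely the step the paper takes before building $M_{T,x}$: replace $T$ by an equivalent machine that, until it halts (if ever), makes repeated sweeps over its tape, touching a new cell on each sweep. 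This preprocessing preserves halting behavior and guarantees that a non-halting computation uses arbitrarily large amounts of tape, which is the property your argument actually needs in order to produce, for every $s$, a witness string forcing stack height at least $s$ and thus to conclude $\wspace_{M_T}(n) \notin \bigO(1)$.
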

\begin{proof}	Given a deterministic Turing machine~$T$ with state set~$Q$, input alphabet $\Sigma$,
	and working alphabet~$\Gamma$,
	an encoding of a valid computation of~$T$ has the form~$\alpha_1\$\alpha_2\$\cdots\$\alpha_m$, $m\geq 1$,
	in which
	\begin{enumerate}[nosep]
		\item\label{it:configurations} $\alpha_i\in\Gamma^* Q \Gamma^*$ is the encoding of a valid configuration, for~$i=1,\ldots,m$;
		\item\label{it:initial} $\alpha_1$ is the encoding of the initial configuration of~$T$;
		\item\label{it:successive} $\alpha_{i+1}$ is the configuration reachable from~$\alpha_{i}$ by one step in $T$;
		\item\label{it:accepting} $\alpha_m$ is the encoding of an accepting configuration of~$T$.
	\end{enumerate}
	Encodings of partial valid computations of Turing machines are obtained by dropping Condition~\ref{it:accepting}.

	As proved by Hartmanis~\cite{Har67},
	the set of words encoding invalid computations of Turing machines is a context-free language.

	Let us consider a deterministic Turning machine~$T$ that, until it halts (if it does), uses
	arbitrarily large amounts of tape (a Turing machine can be made to make ``sweeps'' on the tape, always using a new cell on each sweep).

	We build a \CSA~$M_{T,x}$
	that accepts the complement of the set of partial valid computations of~$T$
	on a given input~$x$.

	The input is $u = \beta_1\$\beta_2\$\ldots \$\beta_r$, $r\geq 1, \beta_i \in \Gamma^*$.
	First, there is a regular language $R_1 = R_0(\$R_0)^*, R_0 = \Gamma^* Q \Gamma^*$ of all strings that satisfy condition \ref{it:configurations}. Similarly, if $\alpha_1$ is the initial configuration of $T$ on $x$, then there is a regular language $R_2 = \{\alpha_1\} \cup \{\alpha_1 \$ \alpha \mid \alpha \in (\Sigma\cup\{\$\})^*\}$ that satisfies condition \ref{it:initial}. Lastly, there is a language (not regular) $R_3$ of words satisfying condition \ref{it:successive}. The machine $M_{T,x}$ will accept $\overline{R_1} \cup \overline{R_2} \cup \overline{R_3}$, and is constructed as follows:
	At the beginning of the computation,
	$M_{T,x}$ guesses which one among Conditions~\ref{it:configurations},\ref{it:initial}, and~\ref{it:successive} will be verified to not be satisfied, and it performs the appropriate computation --- therefore each accepted word $u$ will have $\wspace(u)$ set to be the minimum possible under any of the three conditions. For all invalid computations $u \in \overline{R_1} \cup \overline{R_2}$, this can be verified entirely using the finite control (and not use the stack) since they are regular, and therefore $\wspace(u) = 0$ for all such words. To accept words in~$\overline{R_3}$, $M_{T,x}$ guesses some $\beta_i$, pushes $\beta_i$ to the stack, and verifies that $\beta_{i+1}$ cannot be produced from $\beta_i$ with one move of $T$. For all words $u \in (\overline{R_1} \cup \overline{R_2}) \cap \overline{R_3}$, $\wspace(u) = 0$. For those words $u \in \overline{R_3} \cap R_1 \cap R_2$, it can be higher. Assume that $u$ is in this set. Then, 
	$$\wspace(u) = \min_{1 \leq i \leq r-1} \{|\beta_i| \mid \beta_i \mbox{~does not produce~} \beta_{i+1}\mbox{~in one move of~} T\}.$$ 
	Let $j$ be the largest number such that $\beta_1 \$ \cdots \$ \beta_j$ is the encoding of the first~$j$ configurations of~$T$ on~$x$. We know $j \geq 1$ since~$u \in R_2$, and we know $j < r$ since $u \in \overline{R_3}$. Hence,
	$\wspace(u) \leq |\beta_j|$.
	
	Assume that $T$ accepts $x$ in some finite number of steps $m$. Let the accepting sequence of configurations of $T$ on $x$ be
	$v= \alpha_1\$ \cdots \$ \alpha_m$. Then $u$ must satisfy $\wspace(u) \leq |\alpha_m|$. Hence, every word in $\overline{R_1} \cup \overline{R_2} \cup \overline{R_3}$ requires at most $|\alpha_m|$ space, and $\wspace_{M_{T,x}}(n) \in \bigO(1)$.
	
	Assume that $T$ does not accept $x$ in some finite number of steps. For any arbitrarily large integer $s$, consider the string $ v = \alpha_1 \$ \cdots \$ \alpha_i \$ \beta$, where $|\alpha_i| \geq s$ (which must exist since $T$ will eventually increase the tape size past size $s$ by the assumption that it uses arbitrarily much space), and $\beta \in \Gamma^* Q \Gamma^*$ with $\alpha_i$ not producing $\beta$ in one move of $T$. In this case, $\wspace(v) \geq s$, and thus $\wspace_{M_{T,x}}(n) \notin \bigO(1)$.
	
	Hence, it is undecidable whether or not a given $\CSA$ $M$ satisfies $\wspace_M \in \bigO(1)$.
\end{proof}
It therefore follows that this property is undecidable for $\NESA$ and $\SA$.

\NOTE{Copied from appendis END}

On the other hand,
although it may seem counterintuitive,
the same problem is decidable for the accept and strong measures,
even for stack automata.
\begin{proposition}
\label{decidableconstant}
	For $z \in \{ {\rm a}, {\rm s} \}$, it is decidable whether an \SA\ $M$ satisfies $\zspace(M) \in \bigO(1)$ or not.
\end{proposition}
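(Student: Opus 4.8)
The plan is to reduce both problems to a decidable question about the finiteness of a \emph{regular} store language, exploiting the fact, recalled earlier in the paper, that $\stlang(N)$ is a regular language for every stack automaton $N$, together with the decidability of finiteness for regular languages.

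First consider the accept measure. I claim that $\aspace_M(n)\in\bigO(1)$ if and only if the set of stack contents occurring in $\stlang(M)$ has bounded length, equivalently, if and only if $\stlang(M)$ is finite. Indeed, every word of $\stlang(M)$ has the form $q\gamma$ for $q$ a single state symbol, so $\stlang(M)$ is finite exactly when $\strlen{\gamma}$ is bounded over all $q\gamma\in\stlang(M)$. If this length is bounded by some constant $c$, then every configuration occurring in any accepting computation uses stack size at most $c$; hence no accepted word admits accepting computations of arbitrarily large stack size, so every $\aspace_M(u)$ is finite and bounded by $c$, giving $\aspace_M(n)\in\bigO(1)$. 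Conversely, if $\stlang(M)$ is infinite, there are accepting computations with arbitrarily large stacks, so $\aspace_M(n)\notin\bigO(1)$. Since $\stlang(M)$ is regular and finiteness of a regular language is decidable, the accept case follows.

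For the strong measure the difficulty is that $\sspace$ ranges over \emph{all} partial computations, whereas $\stlang$ captures only configurations of accepting computations. To bridge this gap, I would build from $M$ a new $\SA$ $M'$ in which every partial computation of $M$ is embeddable into an accepting computation, by adjoining an ``accept-and-flush'' gadget: introduce a fresh final read state $f$, add a $\lambda$-read transition $(q,\lambda,x)\goesto(f,0)$ for every state $q$ and every $x\in\Gamma$ (leaving the stack and head untouched), and add read transitions $(f,a,x)\goesto(f,0)$ for all $a\in\Sigma$ and $x\in\Gamma$ so that $f$ consumes any remaining input without altering the stack. Then any partial computation of $M$ on an input $w$ reaching a configuration $(q,v,\gamma)$ can be continued in $M'$ as $(q,v,\gamma)\mover(f,v,\gamma)\movers(f,\lambda,\gamma)$, yielding an accepting computation of $M'$ that passes through $(q,v,\gamma)$ and through every earlier configuration of that partial computation. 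Conversely, none of the new transitions ever pushes a symbol, so the stack contents arising in computations of $M'$ are precisely those reachable by partial computations of $M$. Hence $\stlang(M')$ has bounded stack length if and only if the reachable stack sizes of $M$ are bounded, i.e.\ if and only if $\sspace_M(n)\in\bigO(1)$; and since $\stlang(M')$ is again regular, finiteness is decidable.

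The main obstacle is the strong-measure reduction: one must verify that the gadget is a legitimate stack automaton and, crucially, that it neither loses any reachable stack content nor manufactures new large stacks, so that the regular object $\stlang(M')$ (defined via accepting computations) faithfully reflects the non-accepting partial computations that the strong measure counts. The accept case, by contrast, is essentially immediate once the equivalence with finiteness of $\stlang(M)$ is stated. Note that both arguments apply to full $\SA$, not merely to $\CSA$, since the regularity of store languages and the gadget construction do not rely on the partitioned-state or non-erasing restrictions.
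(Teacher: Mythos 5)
Your proposal is correct, and its skeleton is exactly the paper's: reduce both questions to finiteness of a regular store language (regularity of $\stlang$ plus decidability of finiteness for regular languages), with the accept case handled directly on $\stlang(M)$. The only divergence is in the strong case. The paper simply makes all states of $M$ final and takes the store language of the result; this suffices because any partial computation of $M$ on $u$ that has consumed the prefix $u'$ is a full computation on the input $u'$ itself, hence accepting once all states are final, so the modified store language already records every reachable configuration. You instead adjoin an explicit ``accept-and-flush'' state $f$ reachable by $\lambda$-read moves that then consumes the leftover input. Both modifications are sound and yield the same set of stack lengths; yours is heavier but makes explicit the point the paper leaves implicit (that configurations with unconsumed input must be embeddable into accepting computations before $\stlang$ can see them), whereas the paper's one-line trick silently relies on the prefix observation. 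Your closing remarks --- that no new transition pushes, so no spurious large stacks are created, and that nothing uses the partitioned-state or non-erasing restrictions, so the argument covers full $\SA$ --- are exactly the right verifications.
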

\begin{proof}
For the accept measure, first, we construct a finite automaton $M'$ accepting the store language of $M$. We can then decide finiteness of $L(M')$ since it is regular, which is finite if and only if $M$ operates in constant space.

For the strong measure, we can take $M$, and change it so that all states are final, then calculate the store language, and decide finiteness.
\MAR{Decidable whether a, w are bounded, use store language, check if finite.}
\end{proof}
Hence, this property is decidable for $\NESA$ and $\CSA$ as well.

\bigskip
Next, we consider the decidability of whether a $\CSA$ is ${\rm a}$-limited or ${\rm s}$-limited.
\begin{proposition}
\label{decidablelimited}
For $z \in \{ {\rm a}, {\rm s} \}$, it is decidable whether an \CSA\ $M$ is $z$-limited or not.
\end{proposition}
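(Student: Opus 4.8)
The plan is to treat the two measures separately, exploiting the fact that for a \CSA{} the stack is modified only during the write phase and grows monotonically; hence an unbounded stack on a \emph{fixed} input can only be produced by pushing $\lambda$-transitions. I would formalise this with the finite directed graph $G$ whose vertices are pairs $(q,y)$ with $q\in Q$ and $y\in\Gamma_0\cup\{\eml\}$ (a state together with the symbol currently scanned at the top of the stack), and whose edges record the write transitions: a transition $(q,a,y)\goesto(p,\istay)$ gives an edge $(q,y)\to(p,y)$, and a transition $(q,a,y)\goesto(p,\ipush{x})$ gives a \emph{push} edge $(q,y)\to(p,x)$; each edge is labelled by $a\in\Sigma\cup\{\lambda\}$. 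A path in $G$ from $(q_0,\eml)$ corresponds exactly to a write phase on the input it reads, and a cycle all of whose edges are $\lambda$-labelled and which contains at least one push edge is a \emph{$\lambda$-pushing loop}: traversing it $k$ times inserts $\beta^k$ into the stack (for a fixed nonempty $\beta$) while reading no input and returning to the same vertex.

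For the strong measure the characterisation is clean: $M$ is \textbf{s}-unlimited if and only if some $\lambda$-pushing loop is reachable from $(q_0,\eml)$ in $G$. For the forward direction, fix the input $w$ read along a path to the loop; iterating the loop is a partial computation on $w$ of arbitrarily large stack size, so $\sspace_M(w)=\infty$. Conversely, if no $\lambda$-pushing loop is reachable, then every maximal $\lambda$-segment of a computation performs at most $|Q|\cdot|\Gamma|$ pushes before repeating a vertex, so on an input of length $n$ the stack grows by a bounded amount between the at most $n$ input-reading steps and is bounded by $|Q|\cdot|\Gamma|\cdot(n+1)$; hence $M$ is \textbf{s}-limited. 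Reachability of a vertex and membership on a $\lambda$-pushing cycle are both decidable by graph search on $G$, so this case is settled.

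The accept measure is the genuinely harder case, because a $\lambda$-pushing loop is now necessary but \emph{not} sufficient: Example~\ref{composite} already exhibits a \CSA{} with a pushing loop in its write phase that is nonetheless \textbf{a}-limited, since its read phase accepts a stack $\gamma$ only when $|\gamma|$ divides the input length, so for fixed input only finitely many stack sizes survive. Thus I must detect a loop whose pumping can be \emph{absorbed} by the read phase while the input stays fixed. The key structural observation is that if $M$ is \textbf{a}-unlimited, witnessed by a fixed input $u=u_1u_2$ with accepting computations of unbounded stack, then in a sufficiently tall such computation the read head, which reads only $|u_2|$ symbols, must cross all but boundedly many of the inserted $\beta$-blocks using $\lambda$-transitions alone; so the loop may always be located inside a region that the read phase traverses \emph{without reading input}. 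The plan is therefore to build a product automaton $P$ that scans the stack from $\eml$ upwards, simulating the write phase that generates each symbol (reading $u_1$) while simultaneously maintaining, via the standard crossing-sequence construction for two-way finite automata, the $\lambda$-crossing sequence of the read phase at the current cut; in a halting run these crossing sequences have bounded length and so range over a finite set. A \emph{pumpable loop} is then a cycle of $P$ that pushes at least one symbol while reading no input and returns both the write component and the $\lambda$-crossing sequence to their previous values; inserting it $k$ times lengthens the stack by $\beta^k$ and lets the read phase re-cross the new blocks with $\lambda$-moves only, leaving $u_1$, $u_2$, and the accepting behaviour on the fixed part intact.

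To guarantee that such a loop lives in a \emph{genuine accepting} computation, I would intersect the reachability conditions in $P$ with the regularity facts already available: the store language $\stlang(M)$ is regular, and by Lemma~\ref{lem-lwm} the language $L_{w,M}$ of write-phase prefixes of accepting computations is regular, which together let me check that the write component extends to a final stack content in $\stlang(M)$ and that the read phase reaches $F$ on the fixed input. I would then establish: $M$ is \textbf{a}-unlimited if and only if $P$ has a pumpable loop that is reachable from an initial configuration and co-reachable to an accepting one; both tests are graph-search problems on the finite $P$, yielding decidability, while Proposition~\ref{acceptupperbound} ensures that in the absence of such a loop the accept complexity is finite (indeed in $\bigO(n)$). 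I expect the main obstacle to be precisely the bookkeeping in $P$: reconciling the one-way, time-ordered reading of $u_2$ with the two-way order in which the read head visits the cuts. This is exactly what forces the restriction that the pumped blocks be crossed input-free, thereby localising every input-reading crossing to the fixed part of the stack, and proving that this restriction is without loss of generality — via the counting argument sketched above — is the crux of the correctness argument.
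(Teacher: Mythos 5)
Your treatment of the strong measure is correct, and it is in fact more elementary than the paper's. The paper handles $z={\rm s}$ by making all states final and reusing its accept-measure construction, whereas you characterize ${\rm s}$-unlimitedness directly as reachability of a $\lambda$-pushing cycle in the finite write graph $G$. Both directions of your equivalence are sound: the converse holds because, when no reachable $\lambda$-cycle contains a push edge, no push edge can lie inside a strongly connected component of the $\lambda$-subgraph, so every $\lambda$-segment contains at most $|Q|\cdot|\Gamma|$ pushes, giving the bound $|Q|\cdot|\Gamma|\cdot(n+1)$; and reachability is decidable by graph search. That half is a genuinely different, self-contained argument.

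For the accept measure, however, your plan has two genuine gaps. First, the product automaton $P$ is not finite as described: in a halting run of a \emph{nondeterministic} two-way read phase, crossing sequences are \emph{not} bounded (a run may cross a fixed cut arbitrarily often and still halt). Boundedness can only be imposed by excising read-phase loops, but an excised loop may consume input, which changes the witnessing input word and so breaks your ``fixed input $u$'' argument unless it is repaired by a pigeonhole step; alternatively one must replace raw crossing sequences by Shepherdson-style behaviour relations, which is exactly what the paper's conversion of the two-way \NFA s $N_k$ into one-way \NFA s (via Kapoutsis) accomplishes in the proofs of Propositions \ref{prop-csa-linear} and \ref{acceptupperbound}. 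Second, the equivalence you yourself call the crux --- that unbounded accept space forces a pumpable loop lying inside a region crossed input-free --- is only sketched and left open; this is precisely the content of the critical-section pumping argument already established in the proof of Proposition \ref{acceptupperbound}, and the paper's decidability proof simply \emph{reuses} that machinery instead of re-deriving it: there is a constant $\kappa$ such that $M$ is ${\rm a}$-unlimited if and only if some accepting computation has a critical section longer than $\kappa$. The paper then decides this condition by a cleaner device than loop-search in $P$: build $M'$ which, at a nondeterministically guessed point of the write phase, pushes \emph{primed} copies of stack symbols while performing only $\lambda$-transitions, and whose read phase permits only $\lambda$-transitions while scanning primed symbols; the store language of $M'$ is regular, its homomorphic projection onto the primed alphabet is the set of possible critical sections, and $M$ is ${\rm a}$-limited if and only if that regular language is finite, which is decidable (the strong case follows by making all states final). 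Your approach could be completed along these lines, but as written it does not yet constitute a proof of the accept case.
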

\begin{proof}
First consider the case for the accept measure.
Let $M$ be a $\CSA$. By the proof of Proposition \ref{acceptupperbound}, there exists a constant $\kappa$ such that, if there exists a critical section of length more than $\kappa$, then $M$ is ${\rm a}$-unlimited. Conversely, if all critical sections are of length at most $\kappa$ for some $\kappa$, then $M$ is ${\rm a}$-limited. From $M$, create $M'$ that simulates $M$ but at some non-deterministically guessed point, where it switches to pushing primed symbols to the stack only while simulating $\lambda$-transitions which it keeps using until some
non-deterministically guessed point, where it switches back to using unprimed symbols. In the read phase, it only allows $\lambda$-transitions while reading the primed symbols of the stack. Then it calculates the store language of $M'$ and it then erases the non-primed symbols with a homomorphism. This resulting language is finite if and only if $M$ is ${\rm a}$-limited. 

The proof also works for strong measure by making all states final.
\end{proof}
This problem is open for $\NESA$ and $\SA$.
\begin{corollary}
Let $M$ be a $\CSA$. For $z \in \{{\rm a}, {\rm s}\}$, exactly one of the following must occur.
\begin{enumerate}[nosep]
\item $M$ is $z$-unlimited;
\item $M$ is $z$-limited, $\zspace(n) \in O(1)$;
\item $M$ is $z$-limited, $\zspace(n) \in O(n), \zspace \notin o(n)$, and $\zlspace(n) \in \Theta(n)$.
\end{enumerate}
Furthermore, it is decidable, given $M$, which of these three cases occur.
\end{corollary}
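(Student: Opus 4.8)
The plan is to observe that the stated trichotomy is essentially a verbatim restatement of Theorem~\ref{thm-csa-full}, so that the genuinely new content of the corollary is the decidability claim; and that claim will follow immediately by chaining together the two decision procedures already furnished by Propositions~\ref{decidablelimited} and~\ref{decidableconstant}. First I would point out that the three cases listed here coincide exactly with the three cases of Theorem~\ref{thm-csa-full}, the only difference being that the present statement suppresses the side remarks about whether $\lang(M)$ is regular. Hence the assertion that, for any fixed $M$ and any fixed $z \in \{{\rm a},{\rm s}\}$, exactly one of the three cases holds requires no new argument and is inherited directly from that theorem.

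For the decidability half I would describe a two-stage procedure. Given $M$, first invoke Proposition~\ref{decidablelimited} to decide whether $M$ is $z$-limited. If the answer is negative, then $M$ is $z$-unlimited and the procedure reports case (1). If the answer is affirmative, then $M$ is $z$-limited, and I would invoke Proposition~\ref{decidableconstant} to decide whether $\zspace(n) \in \bigO(1)$. If this holds, the procedure reports case (2). If it does not, then among the three mutually exclusive possibilities guaranteed by Theorem~\ref{thm-csa-full} the only one remaining is case (3), in which $\zspace(n) \in \bigO(n)$, $\zspace \notin \littleO(n)$, and $\zlspace(n) \in \Theta(n)$; so the procedure reports case (3).

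There is no real obstacle here, since all the substantive work has already been done in the earlier results; the corollary is purely a matter of assembling them. The single point that I would take care to verify explicitly is that the two predicates tested by the procedure, namely ``$M$ is $z$-limited'' and ``$\zspace(n) \in \bigO(1)$'', partition the class of all checking stack automata into precisely the three cases of the statement. This is exactly the content of the trichotomy in Theorem~\ref{thm-csa-full}: being $z$-unlimited is case (1); being $z$-limited with constant complexity is case (2); and being $z$-limited but not of constant complexity forces case (3) by that theorem. Consequently the correctness of the decision procedure reduces entirely to Theorem~\ref{thm-csa-full} together with Propositions~\ref{decidableconstant} and~\ref{decidablelimited}, and the corollary follows.
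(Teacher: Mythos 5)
Your proposal is correct and follows exactly the same route as the paper's own proof: the trichotomy is inherited from Theorem~\ref{thm-csa-full}, and decidability is obtained by first testing $z$-limitedness via Proposition~\ref{decidablelimited} and then testing $\zspace(n) \in \bigO(1)$ via Proposition~\ref{decidableconstant}, with case (3) holding by elimination. Nothing is missing; this matches the paper's argument step for step.
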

\begin{proof}
The first statement is true by Theorem \ref{thm-csa-full}. Thus, it is first possible to test whether a given $M$ is $z$-unlimited by 
Proposition \ref{decidablelimited}, and if not, to decide if $\zspace(n) \in O(1)$
by Proposition \ref{decidableconstant}. If both are not true, then necessarily
$\zspace(n) \in O(n), \zspace \notin o(n)$, and $\zlspace(n) \in \Theta(n)$.
\end{proof}

\section{Conclusions and Future Directions}
In this paper, we defined and studied the weak, accept, and strong space complexity measures for variants of stack automata.
For checking stack automata with the accept or strong measures, there is ``gap'', and no function is possible between constant and linear, or above linear. For non-erasing stack automata, there are machines with complexity between constant and linear. 
Then, it is shown that for the strong measure, there is a checking stack language such that every machine accepting it is ${\rm s}$-unlimited (there is no function bounding the strong space complexity). 
Lastly, it is shown that it is undecidable whether a checking stack automaton has constant space complexity with the weak measure. But, this is decidable for both the accept and strong measures even for stack automata.

Many open problems remain, which is evident from the large number of ? symbols in Tables \ref{tab-results1} and \ref{tab-results2}. It is desirable to know whether there are any gaps between constant and $\bigO(\sqrt{n})$ and between that and linear space for the weak space complexity measure for checking stacks. It is open whether all stack automata have at most linear weak space complexity. The exact accept and strong space complexity functions possible for non-erasing and stack automata (besides constant, square root, and linear) still need to be determined. It is also open whether there is some stack language (or non-erasing stack language) such that every machine accepting it is ${\rm s}$-unlimited. Furthermore, for the accept measure, we conjecture that there is a $\CSA$ language whereby every machine accepting it is ${\rm a}$-unlimited.
Answering these open questions would be of interest to the automata theory community.



\bibliographystyle{ws-ijfcs}
\bibliography{space_of_stacks}{}
\end{document}